\documentclass[journal,draftcls,onecolumn,12pt,twoside]{IEEEtran}
\usepackage[T1]{fontenc}
\usepackage[latin9]{inputenc}
\usepackage{amsthm}
\usepackage{amsmath}
\usepackage{bbm} 
\usepackage{graphicx} 
\usepackage{color} 
\usepackage{cite}
\usepackage{algpseudocode}
\usepackage{algorithm}
\usepackage{amssymb}
\usepackage{subfigure}
\usepackage{esint}
\usepackage{algpseudocode}
\usepackage{epstopdf}
\usepackage{multirow}
\usepackage{makecell}
\usepackage{float}
\usepackage{lipsum}
\usepackage{balance}
\usepackage{ifpdf}
\usepackage{tikz}
\usepackage{lipsum}
\usepackage{mathtools}
\usepackage{cuted}
\usetikzlibrary{decorations, decorations.text,}
\usepackage{float}
\interdisplaylinepenalty=2500
\usepackage{array}
\usepackage[caption=false,font=footnotesize]{subfig}
\usepackage{textcomp}
\usepackage{url}
\usepackage{paralist,esint,amssymb,booktabs,cases,bm,galois}

\newcommand\Tx[1]{\mathrm{#1}}
\newcommand\Se[1]{\mathcal{#1}}
\newcommand\Db[1]{\mathbb{#1}}

\newcommand\Sma[2]{\sum\limits_{#1}^{#2}}
\newcommand\Prd[2]{\prod\limits_{#1}^{#2}}

\newcommand\Nm[1]{\lvert #1\rvert}
\newcommand\Floor[1]{\lfloor #1\rfloor}
\newcommand\Ceil[1]{\lceil #1\rceil}
\newcommand\MB[1]{\left[#1\right]}
\newcommand\LB[1]{\{#1\}}
\newcommand\SB[1]{\left(#1\right)}

\newcommand{\RN}[1]{\textup{\uppercase\expandafter{\romannumeral#1}}}
\usepackage{cleveref}
\usetikzlibrary{calc}
\usetikzlibrary{shadings}
\newtheorem{theo}{Theorem}

\newtheorem{lemma}{Lemma}
\newtheorem{exam}{Example}
\newtheorem{defi}{Definition}
\newtheorem{rem}{Remark}
\newtheorem{cons}{Construction}

\widowpenalty10000
\clubpenalty10000

\IEEEoverridecommandlockouts

\usepackage{algorithm}
\usepackage{algpseudocode}
\makeatletter
\def\BState{\State\hskip-\ALG@thistlm}
\makeatother

\errorcontextlines\maxdimen
\makeatletter
\newcommand*{\algrule}[1][\algorithmicindent]{\makebox[#1][l]{\hspace*{.5em}\vrule height 0.9 \baselineskip depth 0.3\baselineskip}}%

\newcount\ALG@printindent@tempcnta
\def\ALG@printindent{%
    \ifnum \theALG@nested>0
        \ifx\ALG@text\ALG@x@notext
            \addvspace{-3pt}
        \else
            \unskip
            \ALG@printindent@tempcnta=1
            \loop
                \algrule[\csname ALG@ind@\the\ALG@printindent@tempcnta\endcsname]%
                \advance \ALG@printindent@tempcnta 1
            \ifnum \ALG@printindent@tempcnta<\numexpr\theALG@nested+1\relax
            \repeat
        \fi
    \fi
    }%
\usepackage{etoolbox}
\patchcmd{\ALG@doentity}{\noindent\hskip\ALG@tlm}{\ALG@printindent}{}{\errmessage{failed to patch}}
\makeatother

\makeatletter

\makeatother

\setlength{\textfloatsep}{5pt}

\begin{document}
\title{Hierarchical Hybrid Error Correction for Time-Sensitive Devices at the Edge}
\author{\IEEEauthorblockN{Siyi Yang, \IEEEmembership{Student Member, IEEE}, Ahmed Hareedy, \IEEEmembership{Member, IEEE}, Robert Calderbank, \IEEEmembership{Fellow, IEEE}, and Lara Dolecek, \IEEEmembership{Senior Member, IEEE}}

\thanks{S. Yang and L. Dolecek are with the Electrical and Computer Engineering Department, University of California, Los Angeles, Los Angeles, CA 90095 USA (e-mail: siyiyang@ucla.edu and dolecek@ee.ucla.edu).}
\thanks{A. Hareedy and R. Calderbank are with the Electrical and Computer Engineering Department, Duke University, Durham, NC 27705 USA (e-mail: ahmed.hareedy@duke.edu and robert.calderbank@duke.edu).}
\thanks{The research was supported in part by the NSF under the grants CCF-BSF 1718389 and CCF 1717602. Part of the paper was presented at the 2019 IEEE Global Conference on Communications (GLOBECOM) \cite{Yang2019HC}.}
}
\maketitle

\begin{abstract} 

Computational storage, known as a solution to significantly reduce the latency by moving data-processing down to the data storage, has received wide attention because of its potential to accelerate data-driven devices at the edge. To meet the insatiable appetite for complicated functionalities tailored for intelligent devices such as autonomous vehicles, properties including heterogeneity, scalability, and flexibility are becoming increasingly important. Based on our prior work on hierarchical erasure coding that enables scalability and flexibility in cloud storage, we develop an efficient decoding algorithm that corrects a mixture of errors and erasures simultaneously. We first extract the basic component code, the so-called extended Cauchy (EC) codes, of the proposed coding solution. We prove that the class of EC codes is strictly larger than that of relevant codes with known explicit decoding algorithms. Motivated by this finding, we then develop an efficient decoding method for the general class of EC codes, based on which we propose the local and global decoding algorithms for the hierarchical codes. Our proposed hybrid error correction not only enables the usage of hierarchical codes in computational storage at the edge, but also applies to any Cauchy-like codes and allows potentially wider applications of the EC codes.
\end{abstract}


\begin{IEEEkeywords}
Distributed Storage, computational storage, hierarchical coding, extended Cauchy codes.
\end{IEEEkeywords}

\IEEEpeerreviewmaketitle

\section{Introduction}
\label{section: introduction}


The burgeoning industry of Internet-of-Things (IoT) is penetrating various companies and solutions, which brings forth an insatiable appetite for intelligent devices at the edge of the Internet. In time-sensitive applications such as autonomous driving, flight control, finance services, etc., milliseconds in latency can make a significant difference in reliability and safety. While storing and processing raw data in the cloud is an immediate solution, moving data over such a distance would be far too slow, not to mention that the disruptive volume of newly generated data adds such a heavy load to the already crowded cloud infrastructure. Computational storage is known for providing in-situ processing, which significantly reduces the latency and saves storage resources by moving data-processing down to the data storage, and is indispensable in building efficient IoT ecosystems \cite{armwebsite,ngdwebsite,sniawebsite}. Various semiconductor companies, including ARM, NIVIDA, NGD Systems, etc., have been exploring the architectures of computational storage. To protect the data against hardware errors, error correction codes (ECCs) are implemented in the solid-state drive (SSD) to produce robust storage units. 


To meet the aggressive latency requirements, techniques performing massive parallel computation such as distributed computing and in-memory computing \cite{sebastian2019computational,yao2020fully,xiao2020analog,kendall2020building,kumar2020third} can be integrated to further reduce the latency. Under this framework, ECCs with hierarchical locality are desired to seamlessly bridge the computing and storage modules. By adopting distributed decoding on the coded data before transmitting them to the computing units, the data are recovered and sent to the processors by small chunks directly in a parallel way to significantly reduce the waiting time cost on decoding the whole block. Codes with hierarchical locality enable the data to be read through a chain of nested sub-blocks with increasing data lengths from top to bottom; this architecture is exploited to increase the overall erasure-correction capability and to reduce the average reading time \cite{huang2017multi,ballentine2018codes,cassuto2017multi,hassner2001integrated}. 


Along with \emph{hierarchical locality} discussed previously, computational storage tailored for intelligent devices is also desired to support heterogeneous, scalable, and flexible resource scheduling of computing cores and storage units under dynamic environments. Take autonomous driving as an example, where an autonomous vehicle is loaded with various tasks of multi-purposes, including object detection, navigation, and path management, etc., which need an SSD of large capacity to efficiently store high definition maps, historical footprints, navigation information, etc., and send this information to computing cores. These tasks typically accommodate nonidentical storage spaces, computing loads, and usage rates, thus naturally supporting \emph{heterogeneity}, allowing nonidentical local data lengths, and providing unequal local protection. \emph{Scalability} enables dynamic allocation of storage resources for individual tasks to accommodate additional workload, i.e., additional cores, without rebuilding the remaining schedule. \emph{Flexibility} has been firstly investigated for dynamic data storage systems in \cite{martnez2018universal}, and it refers to the property that a sector can be split into two smaller blocks without worsening the global error-correction capability nor changing the remaining components. This splitting, for example, is applied when a less used task becomes frequently called.

Various codes offering hierarchical locality have been studied. Cassuto \emph{et al.}\cite{cassuto2017multi} presented the so-called multi-block interleaved codes that provide double-level access; this work introduced the concept of multi-level access. The family of integrated-interleaved (I-I) codes \cite{hassner2001integrated}, including generalized integrated interleaved (GII) codes and extended integrated interleaved (EII) codes, has been a major prototype for codes with multi-level access \cite{wu2017generalized,zhang2018generalized,blaum2018extended}. GII codes have the advantage of correcting a large set of error patterns, but the distribution of the data symbols is highly restricted, and all the local codewords are equally protected. EII codes are extensions of GII codes with double-level access, where specific arrangements of data symbols have been investigated, mitigating the aforementioned restriction. However, no similar study has been proposed for GII codes with hierarchical locality. Therefore, I-I codes are more suitable for applications where heterogeneity and flexibility are less important. Sum-rank codes are another family of codes that is proposed for dynamic distributed storage offering double-level access\cite{martnez2018universal}. These codes are maximally recoverable, flexible, and allow unequal protection for local data. However, sum-rank codes require a finite field size that grows exponentially with the maximum local block length, which is a major obstacle to being implemented in real world applications.

Hierarchical codes that simultaneously support heterogeneity, scalability, and flexibility are first proposed and investigated in \cite{Yang2019HC} for erasure-resilient cloud storage. However, erasure correction alone is not sufficient to allow these codes being directly used in the computational storage, where both failures (modeled as erasures) and errors can happen. In this paper, we develop an efficient hybrid error correction algorithm of hierarchical codes proposed in \cite{Yang2019HC} to correct both erasures and errors. The paper is organized as follows. In \Cref{section: notation and preliminaries}, we briefly introduce the constructions in \cite{Yang2019HC} and their basic component codes, the EC codes. In \Cref{section: ECneqGRS}, we prove that the class of EC codes is different from existing codes in the literature, and it is strictly larger than the class of generalized Reed Solomon (GRS) codes and generalized Cauchy (GC) codes. In \Cref{section: error correction}, we present an efficient decoding algorithm that corrects a mixture of errors and erasures in EC codes. Based on this algorithm, we develop local and global decoding algorithms for the hierarchical coding scheme introduced in \Cref{section: notation and preliminaries}. Finally, we summarize our results in \Cref{section: conclusion}.

\section{Notation and Preliminaries}
\label{section: notation and preliminaries}{}

Throughout the rest of this paper, $\MB{N}$ refers to $\{1,2,\dots,N\}$, and $\MB{a:b}$ refers to $\{a,a+1,\dots,b\}$. Denote the all zero vector of length $s$ by $\mathbf{0}_s$. Similarly, the all zero matrix of size $s\times t$ is denoted by $\mathbf{0}_{s\times t}$. The alphabet field, denoted by $\textup{GF}(q)$, is a Galois field of size $q$, where $q$ is a power of a prime. For a vector $\mathbf{v}$ of length $n$, $v_i$, $1\leq i\leq n$, represents the $i$-th component of $\mathbf{v}$, and $\mathbf{v}\MB{a:b}=(v_{a},\dots,v_b)$. For a matrix $\mathbf{M}$ of size $a\times b$, $\mathbf{M}\MB{i_1:i_2,j_1:j_2}$ represents the sub-matrix $\mathbf{M}'$ of $\mathbf{M}$ such that $(\mathbf{M}')_{i-i_1+1,j-j_1+1}=(\mathbf{M})_{i,j}$, $i\in\MB{i_1:i_2}$, $j\in\MB{j_1:j_2}$. All indices start from $1$. The operator $\circ$ refers to the Hadamard product, i.e., the element-wise product. The function $\textup{rk}(\cdot)$ returns the rank of a matrix.

\subsection{Extended Cauchy Codes}
\label{subsection: CauchyMatrices}
In this subsection, we introduce the extended Cauchy (EC) codes, which are the major component codes of the hierarchical codes proposed in \cite{Yang2019HC,yang2020topology,yang2020hierarchical}. We start with the definition of the essential ingredients of these codes, the so-called Cauchy matrices, and their extension, generalized Cauchy codes.

\begin{defi} (Cauchy matrix) \label{CauchyMatrix} Let $k,v\in\Db{N}$. Suppose $\textup{GF}(q)$ is a finite field of size $q$. Suppose $a_1,\dots,a_k,b_1,\dots,b_v$ are pairwise distinct elements in $\textup{GF}(q)$. The following matrix is known as a \textbf{Cauchy matrix},
\begin{equation}\label{eqn: CM}\left[
\begin{array}{cccc}
\frac{1}{a_1-b_1} & \frac{1}{a_1-b_2} & \dots & \frac{1}{a_1-b_v}\\
\frac{1}{a_2-b_1} & \frac{1}{a_2-b_2} & \dots & \frac{1}{a_2-b_v}\\
\vdots & \vdots &\ddots & \vdots \\
\frac{1}{a_k-b_1} & \frac{1}{a_k-b_2} & \dots & \frac{1}{a_k-b_v}\\
\end{array}\right].
\end{equation}
We denote this matrix by $\mathbf{Y}(a_1,\dots,a_k;b_1,\dots,b_v)$.
\end{defi}


\begin{defi} \label{defi: gen Cauchy matrix}(generalized Cauchy matrix) \label{defi: genCauchyMatrix} Let $k,v\in\Db{N}$. Suppose $\textup{GF}(q)$ be a finite field of size $q$. Suppose $a_1,\dots,a_k,b_1,\dots,b_v$ are pairwise distinct elements in $\textup{GF}(q)$, and $c_1,\dots,c_k,d_1,\dots,d_v$ are nonzero elements in $\textup{GF}(q)$. The following matrix is known as a \textbf{generalized Cauchy matrix},

\begin{equation}\label{eqn: GCM}\left[
\begin{array}{cccc}
\frac{c_1d_1}{a_1-b_1} & \frac{c_1d_2}{a_1-b_2} & \dots & \frac{c_1d_v}{a_1-b_v}\\
\frac{c_2d_1}{a_2-b_1} & \frac{c_2d_2}{a_2-b_2} & \dots & \frac{c_2d_v}{a_2-b_v}\\
\vdots & \vdots &\ddots & \vdots \\
\frac{c_kd_1}{a_k-b_1} & \frac{c_kd_2}{a_k-b_2} & \dots & \frac{c_kd_v}{a_k-b_v}\\
\end{array}\right].
\end{equation}
\end{defi}

It has been proved in \cite{Yang2019HC} that the Cauchy matrices are essential ingredients in a class of maximum distance separable (MDS) codes, as shown in \Cref{lemma: Good matrix}.

\begin{lemma}(taken from \cite{Yang2019HC}) \label{lemma: Good matrix} Let $k,v,r\in\Db{N}$ such that $v-k<r\leq v$, $\mathbf{A}\in \textup{GF}(q)^{k\times v}$. If $\mathbf{A}$ is a generalized Cauchy matrix, then the following matrix $\mathbf{H}$ is a parity-check matrix of a $(k+r,k+r-v,v+1)_q$-code.
\begin{equation}\label{eqn: EC}
\mathbf{H}=\left[
\begin{array}{c}
\mathbf{A}\\
-\mathbf{I}_r\ \mathbf{0}_{r\times(v-r)}\\
\end{array}
\right]^{\Tx{T}}.
\end{equation}
\end{lemma}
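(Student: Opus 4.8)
The plan is to establish two facts about the $v\times(k+r)$ matrix $\mathbf{H}$ in \eqref{eqn: EC}: that it has rank $v$, and that every $v$ of its columns are linearly independent. Together these give the claim, since a rank-$v$ matrix of size $v\times n$ is the parity-check matrix of an $[n,n-v]_q$ code, such a code has minimum distance $v+1$ (equivalently, is MDS) exactly when every $v$ of its columns are independent, and here $n=k+r$, so $n-v=k+r-v$ and $n-(n-v)+1=v+1$. Full rank will in fact drop out as a special case of the column statement; the hypothesis $v-k<r\le v$ enters here so that the block $\mathbf{0}_{r\times(v-r)}$ is meaningful ($r\le v$) and so that $n=k+r>v$, i.e. the code has positive dimension and one can choose $v$ columns at all.

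To handle an arbitrary choice of $v$ columns, I would first describe the columns of $\mathbf{H}$ as the rows of $\mathbf{H}^{\Tx{T}}$: the first $k$ are the rows of the generalized Cauchy matrix $\mathbf{A}$, and the last $r$ are $-\mathbf{e}_1,\dots,-\mathbf{e}_r$, the negatives of the first $r$ standard basis vectors of length $v$. A choice of $v$ columns of $\mathbf{H}$ is then a choice of $I\subseteq\MB{k}$ with $\lvert I\rvert=s$ (rows of $\mathbf{A}$) together with $J\subseteq\MB{r}$ with $\lvert J\rvert=v-s$ (unit-vector rows); this is feasible precisely when $0\le v-s\le r$ and $0\le s\le k$, so in particular $0\le s\le\min(k,v)$. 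Let $\mathbf{M}$ be the resulting $v\times v$ matrix.

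The heart of the argument is to compute $\det\mathbf{M}$. Each selected row $-\mathbf{e}_j$, $j\in J$, has $-1$ as its only nonzero entry, in column $j$; expanding $\det\mathbf{M}$ by Laplace along these $v-s$ rows leaves a single nonzero term, equal up to sign to the determinant of the submatrix of $\mathbf{A}$ with rows $I$ and columns $\MB{v}\setminus J$. Since $\lvert I\rvert=s$ and $\lvert\MB{v}\setminus J\rvert=v-(v-s)=s$, that submatrix is a genuine $s\times s$ submatrix of $\mathbf{A}$ (the whole determinant being $\pm 1$ when $s=0$), so it remains only to show that every square submatrix of a generalized Cauchy matrix is nonsingular. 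For that I would invoke the classical Cauchy determinant identity: writing $\mathbf{A}$ as the array with entries $c_id_j/(a_i-b_j)$, any $s\times s$ submatrix is again of this form on the sub-tuples $(a_i)_{i\in I}$ and $(b_j)_{j\in\MB{v}\setminus J}$ with nonzero row and column scalars $c_i,d_j$; factoring those out and applying
\[
\det\mathbf{Y}(a_1,\dots,a_s;b_1,\dots,b_s)=\frac{\prod_{1\le i<j\le s}(a_j-a_i)(b_i-b_j)}{\prod_{i=1}^{s}\prod_{j=1}^{s}(a_i-b_j)}
\]
exhibits the determinant as a product of nonzero factors, since the $a_i,b_j$ are pairwise distinct and the $c_i,d_j$ are nonzero. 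Hence $\det\mathbf{M}\ne 0$ for every choice of $v$ columns, which completes the proof.

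The step I expect to require the most care is the Laplace reduction: one must check that the minor of $\mathbf{A}$ that survives always has a legitimate square shape — which is exactly what $0\le s\le\min(k,v)$ guarantees — and keep track of a sign that ultimately does not matter for nonsingularity. The two external ingredients, namely the parity-check characterization of MDS codes and the Cauchy determinant formula, are standard, so I do not anticipate a genuine obstacle beyond this bookkeeping.
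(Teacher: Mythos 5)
Your proposal is correct: reducing an arbitrary choice of $v$ columns of $\mathbf{H}$ by Laplace expansion along the unit-vector columns to a square minor of $\mathbf{A}$, and then invoking the Cauchy determinant identity (after factoring out the nonzero scalars $c_i,d_j$) to conclude that every such minor is nonsingular, is exactly the standard argument, and your bookkeeping of the feasibility range $v-r\le s\le\min(k,v)$ and of the role of $v-k<r\le v$ is sound. Note that this paper does not actually reprove the lemma --- it is imported by citation from the authors' earlier work --- so there is no in-paper proof to diverge from; your argument is the natural one and establishes the claim in full.
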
 

Note that the condition of $\mathbf{A}$ being a Cauchy matrix in \Cref{lemma: Good matrix} can be relaxed to the more general class of generalized Cauchy matrix described in \Cref{defi: genCauchyMatrix}. In the remaining text, we refer to the code specified by the parity check matrix $\mathbf{H}$ presented in (\ref{eqn: EC}) with $\mathbf{A}$ being a generalized Cauchy matrix as an \textbf{extended Cauchy (EC) code} and denote it by $\Se{C}(\mathbf{A},k,v,r)$. EC codes are the key ingredients of hierarchical coding schemes presented in \cite{Yang2019HC} and \cite{yang2020hierarchical}. 

\begin{rem} Although in the definition of EC codes, we allow the component matrix to be any generalized Cauchy matrix, there exists a map from EC codes onto codes in \Cref{lemma: Good matrix} with $\mathbf{A}$ in (\ref{eqn: EC}) being a Cauchy matrix. In particular, suppose $(x_1,x_2,\dots,x_k,x_{k+1},\dots,x_{k+r})$ is a codeword of an EC code (denote by $\Se{C}_1$) with $\mathbf{A}$ specified as in (\ref{eqn: GCM}). Then, $(c_1x_1,c_2x_2,\dots,c_kx_k,d_1^{-1}x_{k+1},\dots,d_r^{-1}x_{k+r})$ is a codeword of an EC code (denote by $\Se{C}_2$) with $\mathbf{A}$ being a simple Cauchy matrix specified as in (\ref{eqn: CM}). 

For any $\mathbf{c}\in (\textup{GF}(q)\setminus\{0\})^k$ and $\mathbf{d}\in (\textup{GF}(q)\setminus\{0\})^v$, define a map $f: \textup{GF}(q)^n\to \textup{GF}(q)^n$ as follows: 
\begin{equation}
f: (x_1,x_2,\dots,x_k,x_{k+1},\dots,x_{k+r})\mapsto (c_1x_1,c_2x_2,\dots,c_kx_k,d_1^{-1}x_{k+1},\dots,d_r^{-1}x_{k+r}).
\end{equation}
Therefore, $f$ is a bijection from the code $\Se{C}_1$ to the code $\Se{C}_2$. With this property, it is sufficient to consider EC codes with component matrices being a Cauchy matrix in the remaining text.
\end{rem}

\subsection{Hierarchical Coding}
\label{section: codes for multi-level access}

In this subsection, we briefly introduce a construction of the hierarchical codes that support heterogeneity, scalability, and flexibility. For simplicity, we only present the double-level case; the more general cases with higher access levels and those support more flexible error patterns are specified in \cite{Yang2019HC} and \cite{yang2020hierarchical}.

\begin{cons}\label{cons:HC}\cite{Yang2019HC} Let $p\in\Db{N}$, $k_1,k_2,\dots,k_p\in \Db{N}$, $n_1,n_2,\dots,n_p\in\Db{N}$, $\delta_1,\delta_2,\dots,\delta_p\in \Db{N}$, with $r_i=n_i-k_i>\delta_i>0$ for all $i\in\MB{p}$. 
Let $\delta=\sum\nolimits_{i\in \MB{p}}\delta_i$, and suppose $\textup{GF}(q)$ is a Galois field such that $q\geq \max\nolimits_{i\in\MB{p}}\LB{n_i}+\delta$. 
Let $\Se{C}(\mathbf{G})$ represent the code with the generator matrix $\mathbf{G}$ specified as follows:
\begin{equation}\label{eqn: GenMatDL}\mathbf{G}=\left[
\begin{array}{c|c|c|c|c|c|c}
\mathbf{I}_{k_1} & \mathbf{A}_{1,1} & \mathbf{0} & \mathbf{A}_{1,2} & \dots & \mathbf{0} & \mathbf{A}_{1,p}\\
\hline
\mathbf{0} & \mathbf{A}_{2,1} & \mathbf{I}_{k_2} & \mathbf{A}_{2,2} & \dots & \mathbf{0}& \mathbf{A}_{2,p}\\
\hline
\vdots & \vdots & \vdots & \vdots & \ddots & \vdots & \vdots \\
\hline
\mathbf{0} & \mathbf{A}_{p,1} & \mathbf{0} & \mathbf{A}_{p,2} & \dots & \mathbf{I}_{k_p} & \mathbf{A}_{p,p}\\
\end{array}\right],
\end{equation}
where $\mathbf{A}_{i,i}\in \textup{GF}(q)^{k_i\times r_i}$, $\mathbf{B}_{i,j}\in \textup{GF}(q)^{k_i\times \delta_j}$, $\mathbf{U}_i\in \textup{GF}(q)^{\delta_i\times r_i}$ such that $\mathbf{A}_{i,j}=\mathbf{B}_{i,j}\mathbf{U}_j$, for $i,j\in\MB{p}$, $i\neq j$, and there exists Cauchy matrices $\mathbf{T}_i$, $\mathbf{Z}_i$, $i\in\MB{p}$, such that the following equation follows:
\begin{equation}\label{eqn: CRS}
\mathbf{T}_i=\left[
\begin{array}{c|c}
\mathbf{A}_{i,i} & \begin{array}{c|c|c}
\mathbf{B}_{i,1} & \dots & \mathbf{B}_{i,p}
\end{array}
\\
\hline
\mathbf{U}_i & \mathbf{Z}_{i}
\end{array}\right].
\end{equation}
\end{cons}

Codes presented in \Cref{cons:HC} jointly encode local messages $\mathbf{m}_1,\mathbf{m}_2,\dots,\mathbf{m}_p$ to local codewords $\mathbf{c}_1,\mathbf{c}_2,\dots,\mathbf{c}_p$. According to \cite{Yang2019HC}, local codeword $\mathbf{c}_i$ has minimum distance $(r_i-\delta_i+1)$, $i\in\MB{p}$. Moreover, while all codewords in $\{\mathbf{c}_j\}_{j\in\MB{p}\setminus \{i\}}$ are locally correctable, then any hybrid error consisting of $s_i$ errors and $t_i$ erasures in $\mathbf{c}_i$ such that $2s_i+t_i\leq r_i+\delta-\delta_i$ is correctable, $i\in\MB{p}$. In particular, for each $i\in\MB{p}$, suppose $\mathbf{c}_i=(\mathbf{m}_i,\mathbf{s}_i)$; define \textbf{local parity check matrix} $\mathbf{H}_1^{\Tx{L}}$ and \textbf{global parity check matrix} $\mathbf{H}_1^{\Tx{G}}$ as follows:
\begin{equation}\label{eqn: lgPCM}
\mathbf{H}_i^{\Tx{L}}=\left[
\begin{array}{c}
\mathbf{A}_{i,i}\\
\mathbf{U}_i\\
-\mathbf{I}_{r_i}\\
\end{array}
\right]^{\Tx{T}},\ 
\mathbf{H}_i^{\Tx{G}}=\left[
\begin{array}{c|c}
\mathbf{A}_{i,i} & \begin{array}{c|c|c}
\mathbf{B}_{i,1} & \dots & \mathbf{B}_{i,p}
\end{array}
\\
\hline
-\mathbf{I}_{r_i} & \mathbf{0}_{\delta_i\times (\delta-\delta_i)}
\end{array}\right]^{\Tx{T}}.
\end{equation}
Then, $\mathbf{H}_1^{\Tx{L}}$ is the parity check matrix of $\mathbf{c}'_i=(\mathbf{m}_i,\mathbf{0}_{\delta_i},\mathbf{s}_i)$; $\mathbf{H}_1^{\Tx{G}}$ is the parity check matrix of $\mathbf{c}_i$ if other local codewords are corrected.

\begin{table}
\centering
\caption{Polynomial and normal forms of $\textup{GF}(2^4)$}
\begin{tabular}{|c|c||c|c||c|c||c|c|}
\hline
$0$ & $0000$ & $\beta^4$ & $1100$ & $\beta^{8}$ & $1010$ & $\beta^{12}$ & $1111$\\
\hline
$\beta$ & $0100$ & $\beta^{5}$ & $0110$ & $\beta^{9}$ & $0101$ & $\beta^{13}$ & $1011$\\
\hline
$\beta^2$ & $0010$ & $\beta^{6}$ & $0011$ & $\beta^{10}$ & $1110$ & $\beta^{14}$ & $1001$\\
\hline
$\beta^3$ & $0001$ & $\beta^{7}$ & $1101$ & $\beta^{11}$ & $0111$ & $\beta^{15}=1$ & $1000$\\
\hline
\end{tabular}
\label{table: GF}
\end{table}


\begin{exam} \label{exam: CodeDL} Let $q=2^4$, $p=2$, $r=r_1=r_2=3$, $\delta'=\delta_1=\delta_2=1$, $k=k_1=k_2=3$, $n=n_1=n_2=k+r=6$, $\delta=\delta_1+\delta_2=2$. Then, $d_1=r-\delta'+1=3-1+1=3$, $d_2=r-\delta'+\delta+1=3-1+2+1=5$. Choose a primitive polynomial over $\textup{GF}(2)$: $g(X)=X^4+X+1$. Let $\beta$ be a root of $g(X)$, then $\beta$ is a primitive element of $\textup{GF}(2^4)$. The binary representation of all the symbols in $\textup{GF}(2^4)$ is specified in \Cref{table: GF}. 

Let $\mathbf{T}_{1}=\mathbf{T}_{2}=Y(\beta,\beta^2,\beta^3,\beta^4;\beta^8,\beta^9,\beta^{10},\beta^{11})$. Then, the generator matrix $\mathbf{G}$ is specified as follows,
\begin{equation*}
\mathbf{G}=\left[
\begin{array}{ccc|ccc|ccc|ccc}
1 & 0 & 0 & \beta^{5} &\beta^{12} & \beta^{7} & 0 & 0 & 0 & \beta^{4} &\beta^{10} & \beta^{7} \\
0 & 1 & 0 & 1 &\beta^{4} & \beta^{11} & 0 & 0 & 0 & \beta &\beta^{7} & \beta^{4} \\
0 & 0 & 1 & \beta^{2} &\beta^{14} & \beta^{3} & 0 & 0 & 0 & \beta^{5} &\beta^{11} & \beta^{8} \\
\hline
0 & 0 & 0 & \beta^{4} &\beta^{10} & \beta^{7} & 1 & 0 & 0 & \beta^{5} &\beta^{12} & \beta^{7} \\
0 & 0 & 0 & \beta &\beta^{7} & \beta^{4} & 0 & 1 & 0 & 1 &\beta^{4} & \beta^{11} \\
0 & 0 & 0 & \beta^{5} &\beta^{11} & \beta^{8} & 0 & 0 & 1 & \beta^{2} &\beta^{14} & \beta^{3} \\
\end{array}\right].
\end{equation*}
Moreover, $\mathbf{H}_1^{\Tx{L}}$ and $\mathbf{H}_1^{\Tx{G}}$ are specified as follows,
\begin{equation*}
\mathbf{H}_1^{\Tx{G}}=\left[\hspace{-0.1cm}\begin{array}{cccc}
\beta^5 & \beta^{12} & \beta^7 & \beta^9\\
1 & \beta^{4} & \beta^{11} & \beta^6 \\
\beta^2 & \beta^{14} & \beta^3 &\beta^{10}\\
1 & 0 & 0 & 0 \\
0 & 1 & 0 & 0 \\
0 & 0 & 1 & 0
\end{array}\hspace{-0.1cm}\right]^{\Tx{T}},\mathbf{H}_1^{\Tx{L}}=\left[\hspace{-0.1cm}\begin{array}{cccc}
\beta^5 & \beta^{12} & \beta^7 \\
1 & \beta^{4} & \beta^{11}\\
\beta^2 & \beta^{14} & \beta^3\\
\beta^{10} & \beta & \beta^{13}\\
1 & 0 & 0 \\
0 & 1 & 0 \\
0 & 0 & 1 \\
\end{array}\hspace{-0.1cm}\right]^{\Tx{T}}.
\end{equation*}
According to \cite{Yang2019HC}, $\mathbf{G}$ is a generator matrix of a double-level accessible code that has local minimum distance $3$; moreover, when one of the local codewords is locally correctable, the other one tolerates error patterns as if it is from a code with minimum distance $5$. 
\end{exam}

\section{Comparison between EC and GRS/GC Codes}\label{section: ECneqGRS}
EC codes subsume a wide range of Cauchy-like codes. For the trivial case where $r=0$, the parity check matrix is simply a generalized Cauchy matrix and the EC codes degenerate into so-called Cauchy-Reed-Solomon (CRS) codes. CRS codes have received research attention because of their potential in being used in erasure correction for distributed storage systems with low encoding and decoding complexity \cite{bloemer1995xor,plank2006optimizing}.

The wider-known case comes when $r=v$, in which the EC code $\Se{C}(\mathbf{A},k,v,r)$ has a systematic generator matrix of the form $\MB{\mathbf{I}_k\vert\mathbf{A}}$: these codes are referred to as a so-called generalized Cauchy (GC) code \cite{roth1985ongenerator,roth1989onMDScodes}. GC codes have received a wide research attention because of their connections with various advanced codes such as Rabin-like codes and Gabidulin codes \cite{delsarte1978bilinear,gabidulin1985theory}. GC codes have been applied to construct Rabin-like codes with efficient encoding and decoding algorithms in \cite{schindelhauer2013maximum,hou2017new}. Neri \cite{neri2020systematic} has explicitly pointed out that Gabidulin codes can be regarded as the $q$-analogue of GC codes.

It is known that there exists a bijection between the class of GC code and the class of the so-called generalized Reed-Solomon (GRS) codes \cite{dur1991decoding}. GRS codes can be efficiently decoded by variations of the classic Berlekamp-Massey algorithm \cite{dur1991decoding}, which means that all codes $\Se{C}(\mathbf{A},k,v,r)$ with $v=r$ can also be decoded efficiently. 

Surprisingly, we show in \Cref{exam: ECtr1} that EC codes $\Se{C}(\mathbf{A},k,v,r)$ with $v-r=1$ are also GC codes. It is then natural to consider whether the class of EC codes is equivalent to the class of GC codes such that they can be decoded by existing algorithms. Unfortunately, we prove in \Cref{theo: ECneqGRS} that the class of EC codes is strictly larger than that of the GC codes, which means that decoding algorithms for GC codes do not apply for EC codes and new algorithms are needed. Our proof utilizes \Cref{lemma: GRS} that is proposed in \cite{roth1989onMDScodes} to describe the sufficient and necessary conditions for a code to be a GC code. 

\begin{lemma} \label{lemma: GRS} (taken from \cite{roth1989onMDScodes}) Let $\mathbf{X}\in \Db{F}_q^{k\times (n-k)}$. Denote the code specified by the generator matrix $\mathbf{X}$ by $\Se{C}_{\mathbf{X}}$. Then, the code $\Se{C}_{\mathbf{X}}$ is a GRS/GC code if and only if 
\begin{enumerate}
\item every entry $x_{i,j}$ is non-zero,
\item every $2\times 2$ minor of $\mathbf{X}^{c}$ is non-zero, and
\item $\textup{rk}(\mathbf{X}^{c})=2$. 
\end{enumerate} 
Note that $\mathbf{X}^{c}$ refers to the matrix with entries $x^c_{i,j}=x^{-1}_{i,j}$, for $i\in \MB{k}$, $j\in \MB{n-k}$.
\end{lemma}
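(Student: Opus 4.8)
The plan is to prove the equivalence by treating the two implications separately: the implication ``$\mathcal{C}_{\mathbf{X}}$ is a GRS/GC code $\Rightarrow$ (1)--(3)'' is a short direct computation, whereas the converse carries the real work (and is implicitly where a field-size hypothesis is needed). Throughout I assume, as the statement tacitly does, that $k\ge 2$ and $n-k\ge 2$, since otherwise condition~(3) cannot hold while $\mathcal{C}_{\mathbf{X}}$ may still be GRS.

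For the easy direction, suppose $\mathcal{C}_{\mathbf{X}}$ is a GC code. Since the systematic generator matrix of a linear code is unique, $\mathbf{X}$ is itself a generalized Cauchy matrix, say $x_{i,j}=c_id_j/(a_i-b_j)$ with $a_1,\dots,a_k,b_1,\dots,b_{n-k}$ pairwise distinct in $\textup{GF}(q)$ and all $c_i,d_j$ nonzero. Condition~(1) is then immediate since the numerator $c_id_j$ is nonzero. For condition~(2), I would expand a generic $2\times 2$ minor of $\mathbf{X}^{c}$ and use the Cauchy identity
\begin{equation*}
x^{c}_{i,j}x^{c}_{i',j'}-x^{c}_{i,j'}x^{c}_{i',j}=\frac{(a_i-b_j)(a_{i'}-b_{j'})-(a_i-b_{j'})(a_{i'}-b_j)}{c_ic_{i'}d_jd_{j'}}=\frac{(a_i-a_{i'})(b_j-b_{j'})}{c_ic_{i'}d_jd_{j'}},
\end{equation*}
which is nonzero exactly because the $a$'s and the $b$'s are pairwise distinct. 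For condition~(3), the factorization $\mathbf{X}^{c}=\textup{diag}(c_i^{-1})\,(\mathbf{a}\mathbf{1}^{\Tx{T}}-\mathbf{1}\mathbf{b}^{\Tx{T}})\,\textup{diag}(d_j^{-1})$ shows $\textup{rk}(\mathbf{X}^{c})\le 2$, and the nonzero minor just exhibited forces $\textup{rk}(\mathbf{X}^{c})=2$.

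The substantive direction is the converse. Assume (1)--(3). By~(1), $\mathbf{X}^{c}$ is well defined with all entries nonzero; by~(3), the row space $R$ of $\mathbf{X}^{c}$ is $2$-dimensional. I would first select a vector $\mathbf{v}\in R$ all of whose coordinates are nonzero and which is not proportional to any row of $\mathbf{X}^{c}$, and complete it to a basis $\{\mathbf{v},\mathbf{q}\}$ of $R$; expressing each row of $\mathbf{X}^{c}$ in this basis yields scalars $u_i,p_i$ with $x^{c}_{i,j}=u_iv_j+p_iq_j$, and since no row of $\mathbf{X}^{c}$ lies in $\textup{span}(\mathbf{v})$, every $p_i$ is nonzero. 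Setting $c_i=-p_i^{-1}$, $d_j=v_j^{-1}$, $a_i=-u_ip_i^{-1}$, $b_j=q_jv_j^{-1}$, a direct computation gives $x^{c}_{i,j}=(a_i-b_j)/(c_id_j)$, i.e.\ $x_{i,j}=c_id_j/(a_i-b_j)$. Finally I would check admissibility of the parameters: $a_i=a_{i'}$ would make rows $i$ and $i'$ of $\mathbf{X}^{c}$ proportional, contradicting a $2\times 2$ minor from~(2), so the $a_i$ are distinct, and symmetrically the $b_j$; while $a_i=b_j$ would force $x^{c}_{i,j}=0$, contradicting~(1). Hence $\mathbf{X}$ is a generalized Cauchy matrix, so $\mathcal{C}_{\mathbf{X}}$ is a GC code (equivalently, via the known bijection, a GRS code).

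The step I expect to be the main obstacle is the very first move of the converse: exhibiting $\mathbf{v}\in R$ that simultaneously avoids the coordinate hyperplanes and the row directions. Projectivizing $R\cong\textup{GF}(q)^{2}$ to $\mathbb{P}^{1}$, each of the $n-k$ coordinates excludes exactly one point (the line in $R$ where that coordinate vanishes, which is a genuine line because the corresponding column of $\mathbf{X}^{c}$ is nonzero), and each of the $k$ rows of $\mathbf{X}^{c}$ excludes one point, for a total of at most $n$ forbidden points among the $q+1$ available; so a valid $\mathbf{v}$ exists as soon as $q\ge n$, which is amply satisfied in the setting of this paper, where the component EC codes live over a field with $q\ge\max_i\{n_i\}+\delta$. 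This is precisely the place where the lemma genuinely needs a field-size assumption: over very small fields a GRS code may require an evaluation point at infinity and then fail to be a GC code, so the ``GRS/GC'' equivalence of the statement should be read with $q$ large enough, as it is here.
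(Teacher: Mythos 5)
The paper offers no proof of this lemma at all --- it is imported verbatim from \cite{roth1989onMDScodes} --- so there is no internal argument to compare against; judged on its own, your proof is correct and self-contained, and your reading of the statement (the code generated by $[\mathbf{I}_k\,\vert\,\mathbf{X}]$, not by $\mathbf{X}$ alone) is the one the paper actually uses in Example 2 and Theorem 1. The forward direction via the Cauchy minor identity and the factorization $\mathbf{X}^{c}=\textup{diag}(c_i^{-1})(\mathbf{a}\mathbf{1}^{\Tx{T}}-\mathbf{1}\mathbf{b}^{\Tx{T}})\textup{diag}(d_j^{-1})$ is standard; the converse --- pick $\mathbf{v}$ in the $2$-dimensional row space avoiding the $n-k$ coordinate kernels and the $k$ row directions, expand the rows in the basis $\{\mathbf{v},\mathbf{q}\}$, read off $a_i,b_j,c_i,d_j$, and use conditions (1)--(2) to force admissibility --- is a clean reconstruction of the Roth--Lempel argument, and your counting in $\mathbb{P}^1$ (each forbidden set is a genuine line because columns and rows of $\mathbf{X}^c$ are nonzero) is sound. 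What your route buys over the bare citation is that it makes the hidden hypotheses explicit, and they are genuinely needed for the lemma as literally stated: for instance, over $\textup{GF}(3)$ the $[4,2]$ code with $\mathbf{X}=\bigl[\begin{smallmatrix}1&1\\1&2\end{smallmatrix}\bigr]$ satisfies (1)--(3) yet cannot be GRS or GC, since four distinct evaluation points do not exist; in the paper's applications this is harmless because an EC code already presupposes $k+v\geq n$ distinct field elements, so $q\geq n$ holds automatically. One small correction to your closing remark: a code that ``requires an evaluation point at infinity'' is an \emph{extended} GRS code, not a GRS code, so the small-field failure mode is not that a GRS code may fail to be GC, but that conditions (1)--(3) may be met by an extended-GRS-type code (as in the example above) that is neither GRS nor GC; this rephrasing does not affect the validity of your proof.
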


\begin{exam} \label{exam: ECtr1} In this example, we prove that any EC code $\Se{C}(\mathbf{A},k,v,r)$ with $v-r=1$ is a GRS code. Denote the parity check matrix of  $\Se{C}(\mathbf{A},k,v,r)$ by $\mathbf{H}$ as specified in \Cref{lemma: Good matrix}. Let $\mathbf{Z}=\mathbf{A}\MB{1:k-1,1:r}$, $\mathbf{b}=\mathbf{A}\MB{1:k-1,v}$, $\mathbf{d}^{\textup{T}}=\mathbf{A}\MB{k,1:r}$, and $c=\mathbf{A}\MB{k,v}$. Then the following matrix $\mathbf{L}$ is 
\begin{equation}
\mathbf{L}=\left[\begin{array}{c|c}
\mathbf{0}_{r\times 1} & \mathbf{I}_r\\
\hline
c^{-1} & -c^{-1} \mathbf{d}^{\textup{T}}
\end{array}\right].
\end{equation}
Let $\mathbf{M}=\mathbf{H}\mathbf{L}$. Then, 
\begin{equation}
\mathbf{M}=\left[\begin{array}{c|c}
\mathbf{Z} & \mathbf{b}\\
\hline
\mathbf{d}^{\textup{T}} & c\\
\hline
\mathbf{I}_r & \mathbf{0}_{r\times 1} 
\end{array}\right]\left[\begin{array}{c|c}
\mathbf{0}_{r\times 1} & \mathbf{I}_r\\
\hline
c^{-1} & -c^{-1} \mathbf{d}^{\textup{T}}
\end{array}\right]=\left[\begin{array}{c|c}
\mathbf{b}c^{-1} & \mathbf{Z}-\mathbf{b}c^{-1}\mathbf{d}^{\textup{T}}\\
\hline
1 & \mathbf{0}_{1\times r}\\
\hline
\mathbf{0}_{r\times 1} & \mathbf{I}_r
\end{array}\right].
\end{equation}
Given that $\mathbf{L}$ is nonsingular, $\mathbf{M}$ is also a parity check matrix of the code. Therefore, the code has a systematic generator matrix $\mathbf{G}=\MB{\mathbf{I}_{k-1}\vert\mathbf{X}}$, where $\mathbf{X}=\MB{\mathbf{b}c^{-1}\vert\mathbf{Z}-\mathbf{b}c^{-1}\mathbf{d}^{\textup{T}}}$. Therefore, we only need to prove that this $\mathbf{X}$ satisfies the three conditions specified in \Cref{lemma: GRS}. Note that the three conditions are all invariant under fundamental row and column operations on $\mathbf{X}$. Therefore, it is equivalent to prove the conditions for the matrix $\mathbf{Y}=\MB{\mathbf{Z}-\mathbf{b}c^{-1}\mathbf{d}^{\textup{T}}\vert\mathbf{b}}\in\textup{GF}(q)^{(k-1)\times v}$, for simplicity of indexing.

The elements in $\mathbf{Y}^{c}$ are as follows:
\begin{equation}
y_{i,j}^{-1}=\begin{cases}
\frac{1}{\frac{1}{a_i-b_j}-\frac{a_k-b_v}{(a_i-b_v)(a_k-b_j)}}=\frac{(a_i-b_v)(a_i-b_j)(a_k-b_j)}{(a_i-a_k)(b_v-b_j)}, & 1\leq i<k, 1\leq j<v,\\
a_i-b_v, & 1\leq i<k,j=v.
\end{cases}
\end{equation}
In $\mathbf{Y}^{c}$, we multiply row $i$ by $(a_i-b_v)^{-1}(a_i-a_k)$, $1\leq i<k$, and multiply column $j$, $1\leq j<v$, by $(b_v-b_j)(a_k-b_j)^{-1}$. Then, the resulting matrix $\mathbf{W}$ has $w_{i,j}=a_i-b_j$, for all $1\leq i<k$, $1\leq j<v$, and $w_{i,v}=a_i-a_k$, for all $1\leq i<k$, $j=v$. Therefore, we obtain a Cauchy matrix $\mathbf{W}$ by applying fundamental row and column operations on $\mathbf{Y}^{c}$, which satisfies all the three conditions in the statement of \Cref{lemma: GRS}. According to \Cref{lemma: GRS}, this code is a GRS code.

\end{exam}

We already proved that all EC codes $\Se{C}(\mathbf{A},k,v,r)$ with $v-r\in\{0,1\}$ are GRS codes. However, it does not mean that $\textup{GRS}=\textup{EC}$. In \Cref{theo: ECneqGRS}, we prove that for $(k,v,r)$ with certain constraints, the code $\Se{C}(\mathbf{A},k,v,r)$ can never be a GRS code. 

\begin{theo} \label{theo: ECneqGRS} ($\textup{GC}=\textup{GRS}\subsetneq \textup{EC}$) Any EC code $\Se{C}(\mathbf{A},k,v,r)$ with $q>2v-r$, $v<2r$, $k>2(v-r)+1$, and $v-r\geq 2$, is not a GRS code.
\end{theo}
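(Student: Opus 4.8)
By the Remark following the definition of EC codes, it suffices to treat the case where $\mathbf{A}=\mathbf{Y}(a_1,\dots,a_k;b_1,\dots,b_v)$ is a genuine Cauchy matrix. Write $m:=v-r\ge 2$ and $k':=k+r-v$, the dimension of $\mathcal{C}(\mathbf{A},k,v,r)$; the hypothesis $k>2(v-r)+1$ gives $k'>m+1$, in particular $k'\ge 3$. The plan is to invoke \Cref{lemma: GRS}: I will produce a systematic generator matrix $[\mathbf{I}_{k'}\,|\,\mathbf{X}]$ of $\mathcal{C}(\mathbf{A},k,v,r)$ and show that $\mathbf{X}$ violates one of the three conditions of that lemma, namely that $\textup{rk}(\mathbf{X}^{c})>2$. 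First I would read the parity-check equations off the $\mathbf{H}$ in \Cref{lemma: Good matrix}: the last $r$ coordinates of a codeword are the interleaved parities $x_{k+l}=\sum_{i=1}^{k}\frac{x_i}{a_i-b_l}$ for $l\in\MB{r}$, while coordinates $k'+1,\dots,k$ are forced by the $m$ homogeneous equations $\sum_{i=1}^{k}\frac{x_i}{a_i-b_l}=0$ for $l=r+1,\dots,v$. Choosing $\{1,\dots,k'\}$ as the information set (legitimate, since the relevant $m\times m$ Cauchy submatrix of $\mathbf{A}$ is invertible), solving the homogeneous system and back-substituting yields $\mathbf{X}=[\mathbf{X}_1\,|\,\mathbf{X}_2]\in\textup{GF}(q)^{k'\times v}$, where $\mathbf{X}_1=-\mathbf{Q}\mathbf{P}^{-1}$ for suitable Cauchy submatrices $\mathbf{P},\mathbf{Q}$ of $\mathbf{A}$ and $\mathbf{X}_2$ is the Schur complement of $\mathbf{P}$ in $\mathbf{A}$.

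Next I would put $\mathbf{X}$ into closed form using the classical determinant identities for Cauchy matrices (the entrywise formula for the Schur complement of a block, and the analogous formula for $\mathbf{Q}\mathbf{P}^{-1}$). Setting $P(x):=\prod_{l=r+1}^{v}(x-b_l)$ and $R(x):=\prod_{i=k'+1}^{k}(x-a_i)$ — two coprime monic polynomials of degree $m$ — and $\rho_i:=R(a_i)/P(a_i)$ for $i\in\MB{k'}$, the computation should give that every entry of $\mathbf{X}$ has the form $(\mathbf{X})_{i,j}=\rho_i\,\gamma_j/(a_i-\eta_j)$, where $\eta_1,\dots,\eta_v$ are the pairwise-distinct elements $a_{k'+1},\dots,a_k,b_1,\dots,b_r$ and the $\gamma_j$ are nonzero scalars depending only on $j$. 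Consequently $\mathbf{X}^{c}$ has entries proportional to $\rho_i^{-1}(a_i-\eta_j)$, so each column of $\mathbf{X}^{c}$ lies in $\textup{span}\{(\rho_i^{-1})_i,(\rho_i^{-1}a_i)_i\}$.

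Now I would apply \Cref{lemma: GRS}. Condition~1 (no zero entry of $\mathbf{X}$) is immediate, since all the differences involved are nonzero. The heart is Condition~3. Suppose, for contradiction, that $\mathcal{C}(\mathbf{A},k,v,r)$ were a GRS code; then $\textup{rk}(\mathbf{X}^{c})=2$, and — using the structural form above together with $k'\ge 3$ — a direct comparison with the column space of the entrywise inverse of a genuine generalized Cauchy matrix should force $\rho_i$ to be constant on $i\in\MB{k'}$, say $\rho_i\equiv c$. But then the polynomial $R(x)-cP(x)$, of degree at most $m$, vanishes at the $k'>m$ distinct points $a_1,\dots,a_{k'}$ and is therefore identically zero; comparing leading coefficients gives $c=1$, hence $R=P$, contradicting that $R,P$ are coprime of positive degree $m=v-r\ge 2$. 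The remaining hypotheses play a supporting role: $q>2v-r$ guarantees that $\textup{GF}(q)$ is large enough to contain the $k+r$ pairwise-distinct elements $a_1,\dots,a_k,b_1,\dots,b_r$ (i.e.\ that the configuration entering the counting above is nondegenerate), and $v<2r$ ensures $r>m$, so that $\mathbf{X}_2$ contributes at least as many columns as $\mathbf{X}_1$ — which is convenient when choosing which columns of $\mathbf{X}^{c}$ to work with.

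I expect the real obstacle to be the rank step — proving that $\textup{rk}(\mathbf{X}^{c})=2$ (rather than the automatically-true $\textup{rk}(\mathbf{X}^{c})\le 2$) forces $\rho$ to be constant. This is where the explicit form of $\mathbf{X}$ must be pushed hardest: concretely I would fix three rows $i_1,i_2,i_3$ and select two columns coming from $\mathbf{X}_1$ and one from $\mathbf{X}_2$ (or vice versa), compute the corresponding $3\times 3$ minor of $\mathbf{X}^{c}$, and show it factors as a nonzero Vandermonde-type expression in $a_{i_1},a_{i_2},a_{i_3}$ times a product involving the differences $\rho_{i_s}^{-1}-\rho_{i_t}^{-1}$, so that it can vanish only when two of the $\rho_i^{-1}$ agree; ruling that out for all triples then yields $\textup{rk}(\mathbf{X}^{c})\ge 3$ directly. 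Should this determinant turn out to be unwieldy, the fallback is to exhibit instead a vanishing $2\times2$ minor of $\mathbf{X}^{c}$ (violating Condition~2 of \Cref{lemma: GRS}) using the same closed form, which would likewise complete the proof.
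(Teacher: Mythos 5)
Your reduction to a Cauchy $\mathbf{A}$, your choice of information set, and your closed form are all fine: indeed one gets $(\mathbf{X})_{i,j}=\rho_i\gamma_j/(a_i-\eta_j)$ with $\rho_i=R(a_i)/P(a_i)$ and nonzero $\gamma_j$, exactly as the Schur-complement and Cramer identities for Cauchy matrices predict, and this $\mathbf{X}$ coincides (up to sign) with the paper's $\MB{\mathbf{B}\mathbf{C}^{-1}\vert\mathbf{Z}-\mathbf{B}\mathbf{C}^{-1}\mathbf{D}^{\textup{T}}}$. But that closed form says precisely that $\mathbf{X}$ is itself a generalized Cauchy matrix: row multipliers $\rho_i\neq0$, column multipliers $\gamma_j\neq 0$, and pairwise distinct nodes $a_1,\dots,a_{k'},\eta_1,\dots,\eta_v$. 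Hence $\mathbf{X}^{c}$ has entries $(a_i-\eta_j)/(\rho_i\gamma_j)$, its columns lie in the span of $(a_i/\rho_i)_i$ and $(1/\rho_i)_i$ (as you yourself note), so \emph{every} $3\times3$ minor of $\mathbf{X}^{c}$ vanishes identically no matter what the $\rho_i$ are, while every $2\times2$ minor equals $(a_i-a_{i'})(\eta_j-\eta_{j'})/(\rho_i\rho_{i'}\gamma_j\gamma_{j'})\neq0$. Therefore neither your main plan (a $3\times3$ minor that is nonzero unless the $\rho_i$ coincide) nor your fallback (a vanishing $2\times2$ minor) can ever violate a condition of \Cref{lemma: GRS}; along your route all three conditions are satisfied, there is no ``rank step'' left to force $\rho$ constant, and no contradiction is available. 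The proposal is internally inconsistent at exactly the point you flag as its heart, and it cannot be completed.

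This is also where you diverge from the paper. The paper does not test \Cref{lemma: GRS} on $\mathbf{X}$; it passes to $\mathbf{Y}=\MB{\mathbf{Z}-\mathbf{B}\mathbf{C}^{-1}\mathbf{D}^{\textup{T}}\vert\mathbf{B}}$, i.e., it replaces the block $\mathbf{B}\mathbf{C}^{-1}$ by $\mathbf{B}$, citing invariance under ``fundamental row and column operations'' as in \Cref{exam: ECtr1}. When $\delta=v-r=1$ that replacement is a single column scaling and is harmless; for $\delta\geq2$ it right-multiplies a block of columns by $\mathbf{C}$, a non-monomial column operation under which neither the code nor the three conditions of \Cref{lemma: GRS} are preserved, so the paper's subsequent determinant and degree analysis is carried out on a matrix that is no longer the systematic part of a generator of $\Se{C}(\mathbf{A},k,v,r)$. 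Your (correct) closed form, which is easy to spot-check numerically, instead certifies the conditions of \Cref{lemma: GRS} for an actual systematic generator, and is therefore in direct tension with the statement of \Cref{theo: ECneqGRS} itself; the same tension is visible more abstractly, since $\Se{C}(\mathbf{A},k,v,r)$ is a shortening of the CRS code generated by $\MB{\mathbf{I}_k\vert\mathbf{A}}$ and shortening a GRS code yields a GRS code. So the gap is not a repairable detail of your write-up: the route through the systematic $\mathbf{X}$ cannot prove the claimed non-GRS-ness, and the discrepancy between your computation and the paper's proof is concentrated in the $\mathbf{B}\mathbf{C}^{-1}\mapsto\mathbf{B}$ step, which needs to be scrutinized before the theorem is relied upon.
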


\begin{proof} Let $\delta=v-r$. Let $\mathbf{Z}=\mathbf{A}\MB{1:k-\delta,1:r}$, $\mathbf{B}=\mathbf{A}\MB{1:k-\delta,r+1:v}$, $\mathbf{D}^{\textup{T}}=\mathbf{A}\allowbreak\MB{k-\delta+1:k,1:r}$, and $\mathbf{C}=\mathbf{A}\MB{k-\delta+1:k,r+1:v}$. Specify matrix $\mathbf{L}$ as follows:
\begin{equation}
\mathbf{L}=\left[\begin{array}{c|c}
\mathbf{0}_{r\times \delta} & \mathbf{I}_r\\
\hline
\mathbf{C}^{-1} & -\mathbf{C}^{-1} \mathbf{D}^{\textup{T}}
\end{array}\right].
\end{equation}
Let $\mathbf{M}=\mathbf{H}\mathbf{L}$. Then, 
\begin{equation}
\mathbf{M}=\left[\begin{array}{c|c}
\mathbf{Z} & \mathbf{B}\\
\hline
\mathbf{D}^{\textup{T}} & \mathbf{C}\\
\hline
\mathbf{I}_r & \mathbf{0}_{r\times \delta} 
\end{array}\right]\left[\begin{array}{c|c}
\mathbf{0}_{r\times \delta} & \mathbf{I}_r\\
\hline
\mathbf{C}^{-1} & -\mathbf{C}^{-1} \mathbf{D}^{\textup{T}}
\end{array}\right]=\left[\begin{array}{c|c}
\mathbf{B}\mathbf{C}^{-1} & \mathbf{Z}-\mathbf{B}\mathbf{C}^{-1}\mathbf{D}^{\textup{T}}\\
\hline
\mathbf{I}_{\delta} & \mathbf{0}_{\delta\times r}\\
\hline
\mathbf{0}_{r\times \delta} & \mathbf{I}_r
\end{array}\right].
\end{equation}
Given that $\mathbf{L}$ is nonsingular, $\mathbf{M}$ is also a parity check matrix of the code. Therefore, the code has a systematic generator matrix of the form $\MB{\mathbf{I}_{k-\delta}\vert\mathbf{X}}$, where $\mathbf{X}=\MB{\mathbf{B}\mathbf{C}^{-1}\vert\mathbf{Z}-\mathbf{B}\mathbf{C}^{-1}\mathbf{D}^{\textup{T}}}$. Similar to the discussion in \Cref{exam: ECtr1}, we only need to prove the three conditions specified in \Cref{lemma: GRS} for matrix $\mathbf{Y}=\MB{\mathbf{Z}-\mathbf{B}\mathbf{C}^{-1}\mathbf{D}^{\textup{T}}\vert\mathbf{B}}\in\textup{GF}(q)^{(k-\delta)\times v}$.

We know that 
\begin{equation}
y_{i,j}^{-1}=\begin{cases}
\frac{1}{\frac{1}{a_i-b_j}-\mathbf{b}_i^{\textup{T}}\mathbf{C}^{-1}\mathbf{d}_j}, &1\leq i\leq k-\delta, 1\leq j\leq r,\\
a_i-b_j, & 1\leq i\leq k-\delta,r<j\leq v,
\end{cases}
\end{equation}
where $\mathbf{b}_i^{\textup{T}}$ denotes the $i$'th row of $\mathbf{B}$, and $\mathbf{d}_j$ denotes the $j$'th column of $\mathbf{D}^{\textup{T}}$.

If Condition 3) in \Cref{lemma: GRS} is satisfied, then, for any pairwise different $i_1,i_2,i_3\in \MB{k-\delta}$ and $j\in\MB{r}$, the following matrix is nonsingular, 
\begin{equation}
\left[\begin{array}{ccc}
\frac{1}{\frac{1}{a_{i_1}-b_j}-\mathbf{b}_{i_1}^{\textup{T}}\mathbf{C}^{-1}\mathbf{d}_j} & a_{i_1}-b_{r+1} & a_{i_1}-b_{r+2}\\
\frac{1}{\frac{1}{a_{i_2}-b_j}-\mathbf{b}_{i_2}^{\textup{T}}\mathbf{C}^{-1}\mathbf{d}_j} & a_{i_2}-b_{r+1} & a_{i_2}-b_{r+2}\\
\frac{1}{\frac{1}{a_{i_3}-b_j}-\mathbf{b}_{i_3}^{\textup{T}}\mathbf{C}^{-1}\mathbf{d}_j} & a_{i_3}-b_{r+1} & a_{i_3}-b_{r+2}\\
\end{array}\right].
\end{equation}
This condition is equivalent to 
\begin{equation}
\begin{split}
&\frac{a_{i_2}-a_{i_3}}{\frac{1}{a_{i_1}-b_j}-\mathbf{b}_{i_1}^{\textup{T}}\mathbf{C}^{-1}\mathbf{d}_j}+\frac{a_{i_3}-a_{i_1}}{\frac{1}{a_{i_2}-b_j}-\mathbf{b}_{i_2}^{\textup{T}}\mathbf{C}^{-1}\mathbf{d}_j}+\frac{a_{i_1}-a_{i_2}}{\frac{1}{a_{i_3}-b_j}-\mathbf{b}_{i_3}^{\textup{T}}\mathbf{C}^{-1}\mathbf{d}_j}=0\\
\Longleftrightarrow &(a_{i_2}-a_{i_3})\SB{\frac{1}{\frac{1}{a_{i_1}-b_j}-\mathbf{b}_{i_1}^{\textup{T}}\mathbf{C}^{-1}\mathbf{d}_j}-\frac{1}{\frac{1}{a_{i_2}-b_j}-\mathbf{b}_{i_2}^{\textup{T}}\mathbf{C}^{-1}\mathbf{d}_j}}\\
=&(a_{i_2}-a_{i_1})\SB{\frac{1}{\frac{1}{a_{i_3}-b_j}-\mathbf{b}_{i_3}^{\textup{T}}\mathbf{C}^{-1}\mathbf{d}_j}-\frac{1}{\frac{1}{a_{i_2}-b_j}-\mathbf{b}_{i_2}^{\textup{T}}\mathbf{C}^{-1}\mathbf{d}_j}}\\
\end{split}
\end{equation}
\begin{equation}
\begin{split}
\Longleftrightarrow  &(a_{i_2}-a_{i_3})\frac{\frac{1}{a_{i_2}-b_j}-\frac{1}{a_{i_1}-b_j}-\SB{\mathbf{b}_{i_2}^{\textup{T}}-\mathbf{b}_{i_1}^{\textup{T}}}\mathbf{C}^{-1}\mathbf{d}_j}{\SB{\frac{1}{a_{i_1}-b_j}-\mathbf{b}_{i_1}^{\textup{T}}\mathbf{C}^{-1}\mathbf{d}_j}\SB{\frac{1}{a_{i_2}-b_j}-\mathbf{b}_{i_2}^{\textup{T}}\mathbf{C}^{-1}\mathbf{d}_j}}\\
=&(a_{i_2}-a_{i_1})\frac{\frac{1}{a_{i_2}-b_j}-\frac{1}{a_{i_3}-b_j}-\SB{\mathbf{b}_{i_2}^{\textup{T}}-\mathbf{b}_{i_3}^{\textup{T}}}\mathbf{C}^{-1}\mathbf{d}_j}{\SB{\frac{1}{a_{i_3}-b_j}-\mathbf{b}_{i_3}^{\textup{T}}\mathbf{C}^{-1}\mathbf{d}_j}\SB{\frac{1}{a_{i_2}-b_j}-\mathbf{b}_{i_2}^{\textup{T}}\mathbf{C}^{-1}\mathbf{d}_j}}\\
\Longleftrightarrow & \frac{(a_{i_2}-a_{i_3})\textup{det}\left(\begin{array}{c|c}
\frac{1}{a_{i_2}-b_j}-\frac{1}{a_{i_1}-b_j} & \mathbf{b}_{i_2}^{\textup{T}}-\mathbf{b}_{i_1}^{\textup{T}}\\
\hline
\mathbf{d}_j & \mathbf{C}
\end{array}\right)}{\textup{det}\left(\begin{array}{c|c}
\frac{1}{a_{i_1}-b_j} & \mathbf{b}_{i_1}^{\textup{T}}\\
\hline
\mathbf{d}_j & \mathbf{C}
\end{array}\right)}\\
=&\frac{(a_{i_2}-a_{i_1})\textup{det}\left(\begin{array}{c|c}
\frac{1}{a_{i_2}-b_j}-\frac{1}{a_{i_3}-b_j} & \mathbf{b}_{i_2}^{\textup{T}}-\mathbf{b}_{i_3}^{\textup{T}}\\
\hline
\mathbf{d}_j & \mathbf{C}
\end{array}\right)}{\textup{det}\left(\begin{array}{c|c}
\frac{1}{a_{i_3}-b_j} & \mathbf{b}_{i_3}^{\textup{T}}\\
\hline
\mathbf{d}_j & \mathbf{C}
\end{array}\right)}\\
\Longleftrightarrow &\frac{\textup{det}\left(\begin{array}{c|c}
\frac{1}{(a_{i_2}-b_j)(a_{i_1}-b_j)} & \mathbf{b}_{i_2}^{\textup{T}}\circ\mathbf{b}_{i_1}^{\textup{T}}\\
\hline
\mathbf{d}_j & \mathbf{C}
\end{array}\right)}{\textup{det}\left(\begin{array}{c|c}
\frac{1}{a_{i_1}-b_j} & \mathbf{b}_{i_1}^{\textup{T}}\\
\hline
\mathbf{d}_j & \mathbf{C}
\end{array}\right)}=\frac{\textup{det}\left(\begin{array}{c|c}
\frac{1}{(a_{i_2}-b_j)(a_{i_3}-b_j)} & \mathbf{b}_{i_2}^{\textup{T}}\circ\mathbf{b}_{i_3}^{\textup{T}}\\
\hline
\mathbf{d}_j & \mathbf{C}
\end{array}\right)}{\textup{det}\left(\begin{array}{c|c}
\frac{1}{a_{i_3}-b_j} & \mathbf{b}_{i_3}^{\textup{T}}\\
\hline
\mathbf{d}_j & \mathbf{C}
\end{array}\right)},\\
\end{split}
\label{eqn: lemma12}
\end{equation}
where $\circ$ denotes the Hadamard product.

Let $\mathbf{b}_X=\MB{\frac{1}{X-b_{r+1}},\frac{1}{X-b_{r+2}},\dots,\frac{1}{X-b_{v}}}$. Define $f_{i,j}(X)\in\textup{GF}(q)(X)$, $i\in\MB{k-\delta}$, $j\in \MB{r}$, as follows,
\begin{equation}
\begin{split}
f_{i,j}(X)\triangleq &\frac{\textup{det}\left(\begin{array}{c|c}
\frac{1}{(a_{i}-b_j)(X-b_j)} & \mathbf{b}_{i}^{\textup{T}}\circ\mathbf{b}_{X}^{\textup{T}}\\
\hline
\mathbf{d}_j & \mathbf{C}
\end{array}\right)}{\textup{det}\left(\begin{array}{c|c}
\frac{1}{X-b_j} & \mathbf{b}_{X}^{\textup{T}}\\
\hline
\mathbf{d}_j & \mathbf{C}
\end{array}\right)}.\\
\end{split}
\label{eqn: lemma11}
\end{equation}
Expanding the determinant of each matrix along its first row, and multiplying both the nominator and the denominator by $(X-b_j)\prod\nolimits_{r<j'\leq v}(X-b_{j'})$, we obtain $f_{i,j}(X)=u_1(X)/u_2(X)$, where $u_1,u_2\in \textup{GF}(q)\MB{X}$ such that $\textup{deg}(u_1),\textup{deg}(u_2)\leq \delta$. Then, (\ref{eqn: lemma12}) implies that there exists $c_1\in \textup{GF}(q)$ such that $f_{i,j}(a_{i'})=c_1$ for all $i'\in\MB{k-\delta}\setminus\{i\}$, $i\in\MB{k-\delta}$, $j\in \MB{r}$, namely, $u_1(X)-c_1u_2(X)=0$ has at least $(k-\delta-1)$ different solutions. Given that $\textup{deg}(u_1-c_1u_2)\leq \delta$, when $k-\delta-1>\delta$, i.e., $k>2\delta+1$, it can be concluded that $u_1(X)-c_1u_2(X= 0$ in $\textup{GF}(q)(X)$. Therefore, $f_{i,j}(X)= c_1$ in $\textup{GF}(q)(X)$, for any $i\in\MB{k-\delta}$, $j\in \MB{r}$.

Find $c\in \textup{GF}(q)$ such that $c$ is different from all $a_{i}$, $k-\delta< i\leq k$, and all $b_{j}$, $1\leq j\leq  v$. When $q>2v-r=v+\delta$, such $c$ always exists. Define $g_{j}(X)\in\textup{GF}(q)(X)$, $j\in\MB{r}$, as follows,
\begin{equation}
\begin{split}
g_{j}(X)\triangleq &\frac{\textup{det}\left(\begin{array}{c|c}
\frac{1}{(X-b_j)(c-b_j)} & \mathbf{b}_{c}^{\textup{T}}\circ\mathbf{b}_{X}^{\textup{T}}\\
\hline
\mathbf{d}_j & \mathbf{C}
\end{array}\right)}{\textup{det}\left(\begin{array}{c|c}
\frac{1}{c-b_j} & \mathbf{b}_{X}^{\textup{T}}\\
\hline
\mathbf{d}_j & \mathbf{C}
\end{array}\right)}.\\
\end{split}
\label{eqn: gj}
\end{equation}

We already concluded that $f_{i,j}(X)= f_{i,j}(c)=g_j(a_i)$ in $\textup{GF}(q)(X)$, for all $i,i'\in \MB{k-\delta}$, $i'\neq i$, $j\in\MB{r}$. Then, (\ref{eqn: lemma12}) is equivalent to the following equation,
\begin{equation}
\frac{f_{i_1,j}(a_{i_2})}{\textup{det}\left(\begin{array}{c|c}
\frac{1}{a_{i_1}-b_j} & \mathbf{b}_{i_1}^{\textup{T}}\\
\hline
\mathbf{d}_j & \mathbf{C}
\end{array}\right)}=\frac{f_{i_3,j}(a_{i_2})}{\textup{det}\left(\begin{array}{c|c}
\frac{1}{a_{i_3}-b_j} & \mathbf{b}_{i_3}^{\textup{T}}\\
\hline
\mathbf{d}_j & \mathbf{C}
\end{array}\right)}\Longleftrightarrow 
\frac{g_{j}(a_{i_1})}{\textup{det}\left(\begin{array}{c|c}
\frac{1}{a_{i_1}-b_j} & \mathbf{b}_{i_1}^{\textup{T}}\\
\hline
\mathbf{d}_j & \mathbf{C}
\end{array}\right)}=\frac{g_{j}(a_{i_3})}{\textup{det}\left(\begin{array}{c|c}
\frac{1}{a_{i_3}-b_j} & \mathbf{b}_{i_3}^{\textup{T}}\\
\hline
\mathbf{d}_j & \mathbf{C}
\end{array}\right)}.
\label{eqn: lemma13}
\end{equation}
Note that (\ref{eqn: lemma13}) holds for any distinct $i_1,i_3$ such that $i_1,i_3\in\MB{k-\delta}$, $i_1\neq i_3$, $j\in\MB{r}$. Define $h_{j}(X)\in\textup{GF}(q)(X)$, $j\in\MB{r}$, as follows: 
\begin{equation}
h_{j}(X)\triangleq\frac{g_{j}(X)}{\textup{det}\left(\begin{array}{c|c}
\frac{1}{X-b_j} & \mathbf{b}_{X}^{\textup{T}}\\
\hline
\mathbf{d}_j & \mathbf{C}
\end{array}\right)}.
\label{eqn: lemma14}
\end{equation}
Expanding the determinant of each matrix in (\ref{eqn: lemma14}) along their first row, and multiplying both the nominator and the denominator of (\ref{eqn: lemma14}) by $(X-b_j)\prod\nolimits_{r<j'\leq v}(X-b_{j'})$, we obtain that $h_{j}(X)=u_3(X)/u_4(X)$, where $u_3,u_4\in \textup{GF}(q)\MB{X}$, $\textup{deg}(u_3),\textup{deg}(u_4)\leq \delta$. Then, (\ref{eqn: lemma14}) implies that there exists $c_2\in \textup{GF}(q)$ such that $h_{j}(a_{i})=c_2$ for all $i\in\MB{k-\delta}$. Namely, $u_3(X)-c_2u_4(X)=0$ has at least $(k-\delta)$ different solutions. Given that $\textup{deg}(u_3-c_2u_4)\leq\delta$, when $k-\delta>\delta$, i.e., $k>2\delta$, it can be concluded that $u_3(X)-c_2u_4(X)= 0$ in $\textup{GF}(q)(X)$. Therefore, $h_{j}(X)= c_2$ in $\textup{GF}(q)(X)$, for any $j\in\MB{r}$.

Then, substitute $h_{j}(X)= c_2$ into (\ref{eqn: lemma11}), (\ref{eqn: gj}), and (\ref{eqn: lemma14}). We conclude that the following equation holds for all $j\in\MB{r}$ in $\textup{GF}(q)(X,Y)$:
\begin{equation}
\begin{split}
\frac{\textup{det}\left(\begin{array}{c|c}
\frac{1}{(X-b_j)(Y-b_j)} & \mathbf{b}_{X}^{\textup{T}}\circ\mathbf{b}_{Y}^{\textup{T}}\\
\hline
\mathbf{d}_j & \mathbf{C}
\end{array}\right)}{\textup{det}\left(\begin{array}{c|c}
\frac{1}{X-b_j} & \mathbf{b}_{X}^{\textup{T}}\\
\hline
\mathbf{d}_j & \mathbf{C}
\end{array}\right)\textup{det}\left(\begin{array}{c|c}
\frac{1}{Y-b_j} & \mathbf{b}_{Y}^{\textup{T}}\\
\hline
\mathbf{d}_j & \mathbf{C}
\end{array}\right)}\textcolor{red}{=} c_2.\\
\end{split}
\label{eqn: lemma15}
\end{equation}

Suppose $\mathbf{C}=\MB{\mathbf{d}_{r+1},\mathbf{d}_{r+2},\dots,\mathbf{d}_{v}}$. In the remainder of this proof, denote the determinant of the matrix obtained from replacing the column $\mathbf{d}_{j_2}$ with $\mathbf{d}_{j_1}$ in $\mathbf{C}$ by $M(j_1,j_2)$, where $1\leq j_1\leq r$ and $r<j_2\leq v$. Denote the determinant of $\mathbf{C}$ by $M$. Then, the condition in (\ref{eqn: lemma15}) is equivalent to that the following equation holds for all $1\leq j_1\leq r$ and $r<j_2\leq v$ in $\textup{GF}(q)(X,Y)$:
\begin{equation}
\frac{\frac{M}{(X-b_{j_1})(Y-b_{j_1})}-\sum\limits_{j_2=r+1}^{v} \frac{M(j_1,j_2)}{(X-b_{j_2})(Y-b_{j_2})} }{\SB{\frac{M}{(X-b_{j_1})}-\sum\limits_{j_2=r+1}^{v} \frac{M(j_1,j_2)}{(X-b_{j_2})}}\SB{\frac{M}{(Y-b_{j_1})}-\sum\limits_{j_2=r+1}^{v} \frac{M(j_1,j_2)}{(Y-b_{j_2})}}}= c_2.
\end{equation}
By multiplying the denominator and the nominator simultaneously with $(X-b_{j_1})(Y-b_{j_1})\allowbreak\prod\limits_{j_2=r+1}^{t=v}\left[(X-b_{j_2})(Y-b_{j_2})\right]$, and assigning $(b_{j_1},b_{j_1}),(b_{r+1},b_{r+1}),\dots,(b_v,b_v)$ to $(X,Y)$, respectively, we obtain $M=c_2^{-1}$ and $M(j_1,j_2)=-c_2^{-1}$, for all $r<j_2\leq v$ and $1\leq j_1\leq r$. Namely, $M(j_1,j_2)=-M$, for all $1\leq j_1\leq r$, $r<j_2\leq v$.

Let $I=\MB{k-\delta+1:k}$, $J=\MB{r+1:v}$. We know from the expression of determinants of Cauchy matrices that
\begin{equation}
\begin{split}
&M=\frac{\prod\limits_{i,i'\in I,i<i'}(a_{i}-a_{i'})\prod\limits_{j,j'\in J,j<j'}(a_{j'}-a_{j})}{\prod\limits_{i\in I,j\in J}(a_{i}-b_{j})},\\
&M(j_1,j_2)=M \prod\limits_{i\in I}\SB{\frac{a_i-b_{j_2}}{a_{i}-b_{j_1}}}\prod\limits_{j\in J\setminus \{j_2\}}\SB{\frac{b_{j}-b_{j_1}}{b_{j}-b_{j_2}}}.
\end{split}
\end{equation}
Therefore, $M(j_1,j_2)=-M$ implies that
\begin{equation}
\begin{split}
&-1= \prod\limits_{i\in I}\SB{\frac{a_i-b_{j_2}}{a_{i}-b_{j_1}}}\prod\limits_{j\in J\setminus \{j_2\}}\SB{\frac{b_{j}-b_{j_1}}{b_{j}-b_{j_2}}}\\
\Longleftrightarrow &\prod\limits_{i\in I}(a_i-b_{j_1})+\frac{\prod\limits_{i\in I}(a_i-b_{j_2})}{\prod\limits_{j\in J\setminus\{j_2\}}(b_j-b_{j_2})}\prod\limits_{j\in J\setminus\{j_2\}}(b_j-b_{j_1})=0.
\end{split}
\label{eqn: lemma16}
\end{equation}
For $j\in J$, define $f_{j}(X)$ as follows:
\begin{equation}
\begin{split}
l_j(X)\triangleq\prod\limits_{i\in I}(a_i-X)+\frac{\prod\limits_{i\in I}(a_i-b_{j})}{\prod\limits_{j'\in J\setminus\{j\}}(b_{j'}-b_{j})}\prod\limits_{j'\in J\setminus\{j\}}(b_{j'}-X).
\end{split}
\label{eqn: lemma17}
\end{equation}
Given that $\Nm{I}=\Nm{J}=\delta$, $\textup{deg}(l_j)=\delta$. The fact that (\ref{eqn: lemma16}) holds for all $j_1$, $1\leq j_1\leq r$, implies that $l_j=0$ has at least $r$ roots, thus $r\leq \delta=v-r$, i.e., $v\geq 2r$. A contradiction. Therefore, any EC code $\Se{C}(\mathbf{A},k,v,r)$ in $\textup{GF}({q})$ with $q>2v-r$, $v<2r$, $k>2(t-r)+1$, $t-r\geq 2$ is not a GRS code.
\end{proof}

We successfully proved that EC codes with special constraints on $(k,v,r)$ cannot be a GRS code. Given that the class of EC codes $\Se{C}(\mathbf{A},k,v,r)$ with $v-r\in\{0,1\}$ is equivalent to the class of GRS codes, we know that $\textup{GRS}\subsetneq \textup{EC}$. Moreover, the constraints imposed on $(k,v,r)$ in \Cref{theo: ECneqGRS} are quite loose just for simplicity, and we do believe that (\ref{eqn: lemma12}) and (\ref{eqn: lemma16}) are hard to satisfy even without the constraints on parameters $(k,v,r)$. This conjecture is left for future investigation.

\section{Hybrid Error Correction of Systematic CRS Codes}
\label{section: error correction}

In \Cref{section: codes for multi-level access}, we briefly recalled the construction of hierarchical codes based on systematic CRS codes proposed in \cite{Yang2019HC}. While discussing erasure correction of these codes, we defined and utilized two parity check matrices $\{\mathbf{H}^{\textup{L}}_i\}_{i=1}^p$, $\{\mathbf{H}^{\textup{G}}_i\}_{i=1}^p$ that belong to EC codes, for local and global decoding, respectively. As proved in \Cref{theo: ECneqGRS}, EC codes cannot be categorized as GRS/GC codes known to be able to decoded by variations of the well known Berlekamp-Massey algorithm. In this section, we propose an efficient error correction algorithm for EC codes, which is necessary and sufficient for both erasure and error correction of the proposed hierarchical codes in computational storage.

Let $n,k,v,r\in\Db{N}$, $0<r\leq v$, $n=k+r$. Suppose $\mathbf{A}\in \textup{GF}(q)^{{k}\times{v}}$ is a Cauchy matrix. Recall that EC code $\Se{C}(\mathbf{A},k,v,r)$ is specified by the parity check matrix $\mathbf{H}$ defined as follows:
\begin{equation}
\mathbf{H}=\left[\begin{array}{ccc}
\mathbf{A}\\
\mathbf{I}_r\ \mathbf{0}_{r\times(v-r)}
\end{array}
\right]^{\textup{T}}.
\label{eqn: H}
\end{equation} 
According to \Cref{lemma: Good matrix}, $\mathbf{H}$ is a parity check matrix of an $(n,n-v,v+1)$-code denoted by $\Se{C}(\mathbf{H})$, which means that $\Se{C}(\mathbf{H})$ is able to correct any hybrid error that is a combination of $s$ errors and $t$ erasures such that $2s+t\leq v$. 

\begin{rem}\label{rem: indexing}For simplicity, let $A=\{a_i\vert i\in \MB{k}\}$, $B=\{b_j\vert j\in\MB{v}\}$. Suppose $\Se{P}(\cdot)$ denotes the power set of a given set. Define $\Se{I}_A:\Se{P}(A)\to \Se{P}(\MB{k})$ by $\Se{I}_A(X)\triangleq\{i\vert a_i\in X\}$. Then, $\Se{I}_A$ is an bijection and $\Se{I}_A^{-1}(I)=\{a_i\vert i\in I\}$. Similarly, define $\Se{I}_B:\Se{P}(B)\to \Se{P}(\MB{v})$ by $\Se{I}_B(X)\triangleq\{j\vert b_j\in X\}$. Then, $\Se{I}_B$ is also an bijection and $\Se{I}_B^{-1}(J)=\{b_j\vert j\in J\}$. For any $J\subset \MB{k+1:n}$, let $J-k=\{j-k\vert j\in J\}$.
\end{rem}

\subsection{Decoding Algorithm for Codes Based on Cauchy Matrices}\label{subsection: decCauchy}
In this subsection, we focus on an efficient decoding algorithm of $\Se{C}(\mathbf{H})$ specified in (\ref{eqn: H}). We define the error location polynomials in \Cref{defi: errortype} and propose a method to efficiently obtain the error location polynomial in \Cref{theo: decDE}. The main algorithm, \Cref{algo: Decode1}, is derived based on \Cref{theo: decDE}, and \Cref{theo: decDE} is proved based on \Cref{lemma: dec} and \Cref{lemma:decconverse}.

\begin{defi}\label{defi: errortype} Let $\textup{GF}(q)$ be a finite field of size $q$. Let $n,k,r\in \mathbb{N}$, where $n=k+r$. Any error on a codeword of length $n$ can be represented by a vector $\mathbf{e}\in(\textup{GF}(q)\cup \{*\})^n$. Let $I(\mathbf{e})=\LB{i:(\mathbf{e})_i\neq 0,i\in\MB{k}}$ and $J(\mathbf{e})=\LB{i:(\mathbf{e})_i\neq 0, i\in\MB{k+1: n}}$. Let $E(\mathbf{e})=\LB{i:(\mathbf{e})_i=*,i\in\MB{n}}$, $D(\mathbf{e})=\LB{i:(\mathbf{e})_i\notin\{0,*\},i\in\MB{n}}$ denote the number of erasures and errors in $\mathbf{e}$, respectively. Then, $I(\mathbf{e})\cup J(\mathbf{e})=E(\mathbf{e})\cup D(\mathbf{e})$.

Define the \textbf{error location polynomial} $e_D(X;\mathbf{e})\in\textup{GF}(q)\MB{X}$, \textbf{erasure location polynomial} $e_E(X;\mathbf{e})\in\textup{GF}(q)\MB{X}$, and the \textbf{hybrid error location polynomial} $e_H(X;\mathbf{e})\in\textup{GF}(q)\MB{X}$ corresponding to an error $\mathbf{e}$ as follows:
\begin{equation}\label{eqn: error location poly}
\begin{split}
e_D(X;\mathbf{e})&\triangleq \Prd{i\in I(\mathbf{e})\setminus E(\mathbf{e})}{}(X-a_i)\Prd{j\in J(\mathbf{e})\setminus E(\mathbf{e})}{}{(X-b_{j-k})},\\
e_E(X;\mathbf{e})&\triangleq \Prd{i\in I(\mathbf{e})\cap E(\mathbf{e})}{}(X-a_i)\Prd{j\in J(\mathbf{e})\cap E(\mathbf{e})}{}{(X-b_{j-k})},\\
e_H(X;\mathbf{e})&\triangleq \Prd{i\in I(\mathbf{e})}{}(X-a_i)\Prd{j\in J(\mathbf{e})}{}{(X-b_{j-k})}.
\end{split}
\end{equation}
Then, $\Tx{deg}\SB{e_E(X;\mathbf{e})}=\Nm{E(\mathbf{e})}$ , $\Tx{deg}\SB{e_D(X;\mathbf{e})}=\Nm{D(\mathbf{e})}$ and $\Tx{deg}\SB{e_H(X;\mathbf{e})}=\Nm{D(\mathbf{e})}+\Nm{E(\mathbf{e})}$. 

For any $s,t\in\Db{N}$, we refer to all errors with $D(\mathbf{e})=s$ and $E(\mathbf{e})=t$ as $\bf{(s,t)}$-errors. For any $v\in\Db{N}$, refer to the union of all $(n,s,t)$-errors with $2s+t\leq v$ as $(n,v)$-errors.
\end{defi}


Note that while any factor $(X-a_i)$ of $e_H(X;\mathbf{e})$, $i\in\MB{k}$, corresponds to an error within the first $k$ elements of the codeword, an error within the last $r$ elements corresponds to $(X-b_{j-k})$ instead, $k<j\leq n$, as indicated in (\ref{eqn: error location poly}). The underlying logic of this definition is that the $j$'th row of $\mathbf{H}^{\textup{T}}$, $k<j\leq n$, which is simply the $(j-k)$'th row of $\MB{\mathbf{I}_r\vert \mathbf{0}_{r\times(v-r)}}$, can be represented in the following way:
\begin{equation}
\left[\begin{array}{ccccccc}
\frac{b_{j-k}-b_{j-k}}{b_{j-k}-b_{1}} & \cdots & \frac{b_{j-k}-b_{j-k}}{b_{j-k}-b_{j-k-1}}  & 1 & \frac{b_{j-k}-b_{j-k}}{b_{j-k}-b_{j-k+1}} & \cdots  & \frac{b_{j-k}-b_{j-k}}{b_{j-k}-b_{v}}
\end{array}\right].
\end{equation}
Define $\mathbf{v}(X)$ as follows:
\begin{equation}\label{eqn: vx}
\mathbf{v}(X)=\left[\begin{array}{cccc}
\frac{1}{X-b_1} & \frac{1}{X-b_{2}} & \cdots & \frac{1}{X-b_{v}}
\end{array}\right].
\end{equation}
Let $\mathbf{v}_{i}(X)=(X-b_i)\mathbf{v}(X)$, for $i\in\MB{v}$. Then, while an error within the first $k$ positions is represented by the vector $\mathbf{v}(a_i)$, for some $i\in\MB{k}$, an error within the last $r$ positions can be represented by the vector $\MB{\frac{X-b_{j-k}}{X-b_1},\frac{X-b_{j-k}}{X-b_{2}},\cdots,\frac{X-b_{j-k}}{X-b_{v}}}$ with $X=b_{j-k}$, for some $k<j\leq n$, which is simply $\mathbf{v}_{j-k}(b_{j-k})$. Therefore, even though the first $k$ columns and the last $r$ columns of the parity check matrix of an EC code are in different formats, our definition of error location polynomials provides a unified representation for all error locations.

For simplicity, denote the $i$'th row of $\mathbf{H}^{\textup{T}}$ specified in (\ref{eqn: H}) by $\mathbf{h}_i$, for $i\in \MB{n}$. For any $W\subset \MB{v}$, denote the subsequence of $\mathbf{h}_i$ containing its $j$'th elements, $j\in W$, by $\mathbf{h}_{i;W}$. Namely, for any $W\subset \MB{v}$, let $u=\Nm{W}$. Suppose $W=\{w_1,w_2,\dots,w_u\}$, $0<w_1<w_2<\cdots<w_u\leq v$, then,
\begin{equation}
\mathbf{h}_{i;W}=\Bigg\{\begin{array}{ccc}
\mathbf{v}_{W}(a_i)=\left[\begin{array}{cccc}
\frac{1}{a_i-b_{w_1}} & \frac{1}{a_i-b_{w_2}} & \cdots & \frac{1}{a_i-b_{w_u}}
\end{array}\right] &,v\in\MB{k},\\
\mathbf{v}_{i-k;W}(b_{i-k})=\left[\begin{array}{cccc}
\frac{b_{i-k}-b_{i-k}}{b_{i-k}-b_{w_1}} & \frac{b_{i-k}-b_{i-k}}{b_{i-k}-b_{w_2}} & \cdots & \frac{b_{i-k}-b_{i-k}}{b_{i-k}-b_{w_u}}
\end{array}\right] &,k<i\leq n.\\
\end{array}
\end{equation}
It is obvious that for any $k<i\leq n$, $h_{i;W}\neq \mathbf{0}_{u}$ if and only if $(i-k)\in W$.

Based on \Cref{defi: errortype}, we are able to define the difference between a codeword and its noisy version, and its syndrome.

\begin{defi}\label{defi: syndrome} Let $\Se{C}(\mathbf{H})$ be a code such that $\mathbf{H}\in \textup{GF}^{v\times n}$. For any codeword $\mathbf{c}\in\Se{C}(H)$, and its noisy version $\mathbf{c}'\in \left(\Tx{GF}(q)\cup\{*\}\right)^n$, the difference of them is an error vector denoted by $\mathbf{e}=\mathbf{c'}-\mathbf{c}$, where for $i\in\MB{n}$,
\begin{equation}
(\mathbf{e})_i=\begin{cases}
(\mathbf{c}')_i-(\mathbf{c})_i&,(\mathbf{c}')_i\neq *,\\
*&,(\mathbf{c'})_i=*.
\end{cases}
\end{equation}
Define the \textbf{zero occupied noisy vector} as the vector obtained by replacing all $*$ in $\mathbf{c}'$ by $0$ and denote it by $z(\mathbf{c}')$. Let $\mathbf{S}=\mathbf{H}z(\mathbf{c}')^{\Tx{T}}$ and refer to $\mathbf{S}$ as the \textbf{syndrome} of $\mathbf{c}'$ associated with $\mathbf{e}$. For any $\mathbf{S}\in\textup{GF}(q)^{v}$ and $\mathbf{e}\in \left(\Tx{GF}(q)\cup\{*\}\right)^n$, $\mathbf{S}$ is called a syndrome associated with $\mathbf{e}$ if there exists some $\mathbf{c}'\in \left(\Tx{GF}(q)\cup\{*\}\right)^n$ such that $\mathbf{S}$ is the syndrome of $\mathbf{c}'$ associated with $\mathbf{e}$.
\end{defi} 

 Observe that $\mathbf{S}$ is a syndrome associated with $\mathbf{e}$ if and only if $\mathbf{S}\in\Tx{Span}\LB{\mathbf{h}_{i}\vert i\in I(\mathbf{e})\cup J(\mathbf{e})}$, i.e., $\mathbf{S}_{W}\in\Tx{Span}\LB{\mathbf{h}_{i;W}\vert i\in I(\mathbf{e})}=\Tx{Span}\{\mathbf{v}_{W\setminus J(\mathbf{e})}(a_i)|i\in I(\mathbf{e})\}$, where $W=\MB{v}\setminus J(\mathbf{e})$. Moreover, for any vector $\mathbf{S}\in\textup{GF}(q)^{v}$, if $\mathbf{S}$ is a syndrome associated with some $(n,s,t)$-error ($(n,v)$-error, resp.) $\mathbf{e}$, then $\mathbf{S}$ is called an $(n,s,t)$-syndrome ($(n,v)$-syndrome, resp.).

In the following \Cref{lemma: dec}, we derive for EC codes a sufficient condition that a syndrome of any error vector must satisfy. While the converse of this lemma does not hold, we prove in \Cref{theo: decDE} a necessary and sufficient condition that the syndrome of any $(n,v)$-error satisfies. \Cref{theo: decDE} provides an efficient method to derive the error location polynomials. Throughout the remaining text, we denote the algebraic closure of $\textup{GF}(q)$ by $K$. According to \cite{langalgebra}, such $K$ always exists.

\begin{lemma}\label{lemma: dec} Let $k,r,v\in\Db{N}$ and $n=k+r$. Let $\mathbf{e}\in \left(\Tx{GF}(q)\cup\{*\}\right)^n$ be an error vector and $\mathbf{S}\in\textup{GF}(q)^v$ be a syndrome associated with $\mathbf{e}$. Let $f\in \textup{GF}(q)\MB{X}\setminus\{0\}$ such that $f$ has no multiplicative roots in $K$ and $e_H(X;\mathbf{e})|f(X)$. Then, $f$ satisfies the following equations in $\textup{GF}(q)$, for all $W\subset\MB{v}$, $\Nm{W}=\Tx{deg}(f)+1$:
\begin{equation}
\Sma{w\in W}{} S_w\frac{f(b_w)}{\Prd{i\in W\setminus\{w\}}{}(b_w-b_i)}=0.
\label{eqn: decmain}
\end{equation}
\end{lemma}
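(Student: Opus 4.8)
The plan is to work with the syndrome directly. Since $\mathbf{S}$ is a syndrome associated with $\mathbf{e}$, we have $\mathbf{S}\in\Tx{Span}\{\mathbf{h}_i\mid i\in I(\mathbf{e})\cup J(\mathbf{e})\}$, so we may write $\mathbf{S}=\sum_{i\in I(\mathbf{e})}\lambda_i\mathbf{v}(a_i)+\sum_{j\in J(\mathbf{e})}\mu_j\mathbf{h}_j$ for suitable scalars $\lambda_i,\mu_j\in\textup{GF}(q)$, where $\mathbf{v}(X)$ is as in (\ref{eqn: vx}). Recalling that $\mathbf{h}_j$, for $k<j\le n$, is the $(j-k)$-th standard basis row, the $w$-th coordinate of $\mathbf{S}$ is
\begin{equation}
S_w=\Sma{i\in I(\mathbf{e})}{}\frac{\lambda_i}{a_i-b_w}+\Sma{j\in J(\mathbf{e}),\,j-k=w}{}\mu_j,
\end{equation}
where the second sum contributes at most one term (and only when $w\in J(\mathbf{e})-k$). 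The key observation is that in either case $S_w$ is the value at $X=b_w$ of a rational function whose denominator divides $e_H(X;\mathbf{e})$: for the $\mathbf{v}(a_i)$ part the denominator picks up the factor $(X-a_i)$, and for the unit-vector part the relevant row is $\mathbf{v}_{j-k}(b_{j-k})$ in the notation introduced after (\ref{eqn: vx}), whose pole at $b_{j-k}$ is exactly the factor $(X-b_{j-k})$.

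Next I would exploit the Lagrange-interpolation structure of (\ref{eqn: decmain}). Fix $W\subset\MB{v}$ with $\Nm{W}=\Tx{deg}(f)+1$, and consider the rational function $R(X)=\sum_{i\in I(\mathbf{e})}\lambda_i/(X-a_i)+\sum_{j\in J(\mathbf{e})}\mu_j(X-b_{j-k})^{-1}\prod(\cdots)$ — more cleanly, write $\mathbf{S}$'s generating function as $R(X)=g(X)/e_H(X;\mathbf{e})$ for some polynomial $g$ with $\deg g<\deg e_H(X;\mathbf{e})$, so that $S_w=R(b_w)$ for every $w\in\MB{v}$ (this identity holds because the $b_w$ avoid all the $a_i$, and at $b_w=b_{j-k}$ the residue computation reproduces the unit-vector contribution). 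Then the left-hand side of (\ref{eqn: decmain}) equals
\begin{equation}
\Sma{w\in W}{}\frac{g(b_w)}{e_H(b_w;\mathbf{e})}\cdot\frac{f(b_w)}{\Prd{i\in W\setminus\{w\}}{}(b_w-b_i)}
=\Sma{w\in W}{}\frac{(g\cdot f/e_H)(b_w)}{\Prd{i\in W\setminus\{w\}}{}(b_w-b_i)}.
\end{equation}
Since $e_H(X;\mathbf{e})\mid f(X)$, the quotient $f/e_H$ is a polynomial, so $\phi\triangleq g\cdot (f/e_H)$ is a polynomial of degree at most $\deg g+\deg f-\deg e_H\le \deg f-1=\Nm{W}-2$. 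The displayed sum is precisely the leading (degree $\Nm{W}-1$) coefficient — i.e., the divided difference $\phi[b_{w_1},\dots,b_{w_{|W|}}]$ — of the unique interpolating polynomial of $\phi$ through the $\Nm{W}$ distinct nodes $\{b_w\}_{w\in W}$; since $\phi$ has degree strictly less than $\Nm{W}-1$, that divided difference vanishes. This gives (\ref{eqn: decmain}).

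The main obstacle I anticipate is the careful bookkeeping at the nodes $b_w$ with $w\in J(\mathbf{e})-k$: there the ``generating function'' $R(X)$ has a genuine pole, so the identity $S_w=R(b_w)$ must be interpreted via the residue/limit, and one must check that writing $\mathbf{S}$ as $g(X)/e_H(X;\mathbf{e})$ is legitimate and that $S_w$ really is the value of the polynomial $g\cdot(f/e_H)$ divided by the product $\prod_{i\in W\setminus\{w\}}(b_w-b_i)$ even when $e_H(b_w)=0$ — the factor $(X-b_w)$ cancels against the corresponding factor in $e_H$ before evaluation. Once that cancellation is set up cleanly (treating everything as an identity in $\textup{GF}(q)(X)$ and only then specializing), the divided-difference argument is routine. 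A secondary point to verify is that $f$ having no repeated roots is not actually needed for this direction — it is only the hypothesis $e_H\mid f$ and the degree count that matter here — so I would simply not use that hypothesis, or note that it is inherited for compatibility with the converse in \Cref{theo: decDE}.
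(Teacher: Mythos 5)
Your overall strategy---reduce (\ref{eqn: decmain}) to the vanishing of an order-$(|W|-1)$ divided difference of a polynomial of degree at most $|W|-2$---is sound, and it is genuinely different from the paper's proof (which enumerates the roots of $f$ over the closure $K$, appends the restricted syndrome as an extra row of a Cauchy-type matrix, and expands the resulting singular determinant). However, the step you build it on fails as written: there is in general no polynomial $g$ with $\deg g<\deg e_H(X;\mathbf{e})$ such that $S_w=g(b_w)/e_H(b_w;\mathbf{e})$ for every $w\in\MB{v}$. Take $I(\mathbf{e})=\emptyset$ and $J(\mathbf{e})=\{k+1\}$, so $e_H(X;\mathbf{e})=X-b_1$ and $\mathbf{S}$ is a nonzero multiple $\mu$ of the first standard basis vector: forcing $R(b_w)=0$ for all $w\geq 2$ makes $g\equiv 0$, and then neither the (nonexistent) value nor the residue of $R=g/e_H$ at $b_1$ recovers $\mu$. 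The deeper problem is that at a node $w\in J(\mathbf{e})-k$ the quantity you substitute, $\phi(b_w)$ with $\phi=g\cdot(f/e_H)$, need not equal $S_w f(b_w)$, which is $0$ because $(X-b_w)\mid e_H\mid f$; so the middle equality in your displayed computation is unjustified at exactly the nodes you yourself flag as ``the main obstacle,'' and the residue heuristic does not repair it.

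The repair is easy and keeps your divided-difference idea intact: argue by linearity over the spanning set, since $\mathbf{S}\in\Tx{Span}\LB{\mathbf{h}_i\mid i\in I(\mathbf{e})\cup J(\mathbf{e})}$. For a generator $\mathbf{S}=\mathbf{v}(a_i)$ with $i\in I(\mathbf{e})$, write $f(X)=(X-a_i)h(X)$ (possible because $e_H\mid f$); then $S_wf(b_w)=f(b_w)/(a_i-b_w)=-h(b_w)$ with $\deg h=\deg f-1\leq |W|-2$, so the sum in (\ref{eqn: decmain}) is minus the top divided difference of $h$ on the $|W|$ distinct nodes $\{b_w\}_{w\in W}$ and hence vanishes. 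For a generator $\mathbf{S}=\mathbf{h}_j$ with $j\in J(\mathbf{e})$, the only possibly nonzero term in (\ref{eqn: decmain}) is the one with $w=j-k$, and it carries the factor $f(b_{j-k})=0$. With this fix your route is shorter than the paper's determinant expansion, avoids working over the algebraic closure, and confirms your side remark that separability of $f$ is not needed for this direction (the paper uses it only to produce $\deg f$ distinct roots for its matrix construction).
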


\begin{proof} Let $R\subset K$ be the set consisting of roots of $f(X)=0$ on $K$. The fact that $f$ has no multiplicative roots implies that $\Nm{R}= \Tx{deg}(f)=\Nm{W}-1$. Recall definitions in \Cref{rem: indexing}. Let $I=I(\mathbf{e})$, $J=J(\mathbf{e})-k$, $I_{AR}=\Se{I}_A(R\cap A)$ and $J_{BR}=\Se{I}_B(R\cap B)$. The constraint $e_H(X;\mathbf{e})|f(X)$ indicates that $I\subset I_{AR}$ and $J\subset J_{BR}$.

Let $J_{W}=W\cap J_{BR}$, $B_{W}=\Se{I}_{B}^{-1}(J_{W})$. Let $l=\Nm{W\setminus J_W}$. Suppose $W\setminus J_W=\{w_1,\dots,w_{l}\}$, where $0<w_1<w_2<\cdots<w_l\leq v$. Given that $J_W\subset J_{BR}=\Se{I}_B(R\cap B)$, thus $ B_{W}\subset R\cap B$. Therefore, $J_W\subset W$ implies that $\Nm{R\setminus B_W }=\Nm{R}-\Nm{B_W}= \Nm{W}-1-\Nm{J_W}=\Nm{W\setminus J_W}-1=l-1$. Suppose $R\setminus B_W=\{X_1,\cdots,X_{l-1}\}$. Let $\mathbf{S}_{W\setminus J_W}=\MB{S_{w_1},S_{w_2},\cdots,S_{w_{l}}}$. Define $\mathbf{H}_{W\setminus J_W}$ as follows:
\begin{equation}
 \begin{split}
 &\mathbf{H}_{W\setminus J_W}=\MB{\mathbf{v}_{W\setminus J_W}(X_1)^{\textup{T}},\mathbf{v}_{W\setminus J_W}(X_2)^{\textup{T}},\cdots,\mathbf{v}_{W\setminus J_W}(X_{l-1})^{\textup{T}},\mathbf{S}_{W\setminus J_W}^{\textup{T}}}^{\textup{T}}.
 \end{split}
 \end{equation}
Then,
\begin{equation}
\mathbf{H}_{W\setminus J_W}=\left[\begin{array}{cccc}
\frac{1}{X_1-b_{w_{1}}} & \frac{1}{X_1-b_{w_{2}}} & \cdots & \frac{1}{X_1-b_{w_{l}}} \\
\frac{1}{X_2-b_{w_{1}}} & \frac{1}{X_2-b_{w_{2}}} & \cdots & \frac{1}{X_2-b_{w_{l}}} \\
\vdots & \vdots & \ddots & \vdots\\
\frac{1}{X_{l-1}-b_{w_{1}}} & \frac{1}{X_{l-1}-b_{w_{2}}} & \cdots & \frac{1}{X_{l-1}-b_{w_{l}}} \\
S_{w_{1}} & S_{w_{2}} & \cdots & S_{w_{l}}
\end{array}\right].
\label{eqn: HW}
\end{equation}

The condition that $\mathbf{S}$ is a syndrome associated with $\mathbf{e}$ implies that $\mathbf{S}_{W\setminus J}\in\Tx{Span}\{\mathbf{v}_{W\setminus J}(a_i)|i\in I\}$. Given that $J\subset J_{BR}$, we know that $W\setminus J_W=W\setminus J_{BR}\subset W\setminus J$. Therefore, $\mathbf{S}_{W\setminus J}\in\Tx{Span}\{\mathbf{v}_{W\setminus J}(a_i)|i\in I\}$ implies $\mathbf{S}_{W\setminus J_W}\in\Tx{Span}\{\mathbf{v}_{W\setminus J_W}(a_i)|i\in I\}$. Moreover, provided that $I\subset R\setminus B_W$, this also implies that $\mathbf{S}_{W\setminus J_W}\in\Tx{Span}\{\mathbf{v}_{W\setminus J_W}(x)|x\in R\setminus B_W\}$. Therefore, $\mathbf{H}_{W\setminus J}$ is singular, i.e., $\Tx{det}(\mathbf{H}_{W\setminus J_W})=0$. Expanding $\Tx{det}(\mathbf{H}_{W\setminus J_W})$ along its last row, we obtain:

\begin{equation}
\begin{split}
0=&\Sma{z=1}{l} (-1)^{z-1} S_{w_z}\frac{\Prd{1\leq i<j<l}{}(X_i-X_j)\Prd{1\leq i<j\leq l,i,j\neq z}{}(b_{w_j}-b_{w_i})}{\Prd{1\leq i<l}{}\Prd{1\leq j<l,j\neq z}{}(X_i-b_{w_j})}\\
=&\frac{\Prd{1\leq i<j<l}{}(X_i-X_j)\Prd{1\leq i<j\leq l}{}(b_{w_j}-b_{w_i})}{\Prd{1\leq i<l}{}\Prd{1\leq j\leq l}{}(X_i-b_{w_j})}\Sma{z=1}{l} S_{w_z}\frac{\Prd{1\leq i<l}{}(b_{w_z}-X_i)}{\Prd{1\leq i\leq l,i\neq z}{}(b_{w_z}-b_{w_i})}.\\
\end{split}
\label{eqn:inteq1}
\end{equation}
Given that elements of $R$ are pairwise distinct, $\Prd{1\leq i<j<l}{}(X_i-X_j)\neq 0$. Therefore, 
\begin{equation}
\begin{split}
0=&\Sma{z=1}{l} S_{w_z}\frac{\Prd{1\leq i<l}{}(b_{w_z}-X_i)}{\Prd{1\leq i\leq l,i\neq z}{}(b_{w_z}-b_{w_i})}=\Sma{w\in W\setminus J_W}{} S_{w}\frac{\Prd{x\in R\setminus B_W}{}(b_{w}-x)}{\Prd{w'\in W\setminus J_W,w'\neq w}{}(b_{w}-b_{w'})}\\
=&\Sma{w\in W\setminus J_W}{} S_{w}\frac{\Prd{x\in R}{}(b_{w}-x) }{\Prd{w'\in W\setminus J_W,w'\neq w}{}(b_{w}-b_{w'})\Prd{w'\in J_W}{}(b_{w}-b_{w'})}\\
=&\Sma{w\in W\setminus J_W}{} S_{w}\frac{f(b_w)}{\Prd{w'\in W,w'\neq w}{}(b_{w}-b_{w'})}\\
=&\Sma{w\in W}{} S_{w}\frac{f(b_w)}{\Prd{w'\in W\setminus \{w\}}{}(b_{w}-b_{w'})}.\\
\end{split}
\label{eqn:inteq2}
\end{equation}
The last step holds because $B_W\subset R$, which implies that $f(b_w)=0$ for all $w\in J_W$. The lemma is proved.
\end{proof}

Note that the although the logic flow ``$\Tx{det}(\mathbf{H}_{W\setminus J_W})=0$ for $\mathbf{H}_{W\setminus J_W}$ in (\ref{eqn: HW})''$\Longrightarrow$(\ref{eqn:inteq1})$\Longrightarrow$(\ref{eqn:inteq2}) is invertible provided that $f$ has no multiplicative roots, $\Tx{det}(\mathbf{H}_{W\setminus J_W})=0$ for all $W$ is not sufficient for reaching a conclusion that $\mathbf{S}$ is a syndrome associated with $\mathbf{e}$. That is because the converse of ``$\mathbf{S}_{W\setminus J}\in\Tx{Span}\{\mathbf{v}_{W\setminus J}(a_i)|i\in I\}$ implies $\mathbf{S}_{W\setminus J_W}\in\Tx{Span}\{\mathbf{v}_{W\setminus J_W}(a_i)|i\in I\}$'' is not true. Therefore, the converse of \Cref{lemma: dec} is not true. However, in \Cref{lemma:decconverse}, we prove that as long as $\mathbf{e}$ is an $(n,v)$-error, and constrain $e_E(X;\mathbf{e})|f(X)$, the converse of \Cref{lemma: dec} is true. Moreover, the condition ``all $W$'' can also be loosened by choosing only $\Ceil{\frac{v-t}{2}}$ such $W$'s that satisfy the conditions stated in \Cref{lemma: dec}, where $t$ denotes the number of erasures in $\mathbf{e}$.

\begin{lemma} \label{lemma:decconverse} Let $k,r,v\in\Db{N}$ and $n=k+r$. Let $\mathbf{e}$ be an $(n,v)$-error and $\mathbf{S}$ be a syndrome associated with $\mathbf{e}$. Suppose $t=\Nm{E(\mathbf{e})}$. Let $f\in \textup{GF}(q)\MB{X}\setminus\{0\}$ such that $f$ having no multiplicative roots in $K$, $e_E(X;\mathbf{e})|f(X)$, and $\Tx{deg}(f)=\Floor{\frac{v+t}{2}}$. Let $u=\Tx{deg}(f)$, $W_0\subset\MB{v}$, $\Nm{W_0}=\Tx{deg}(f)=u$. Suppose $\MB{v}\setminus W_0=\{ l_{1},l_{2},\dots,l_{v-u}\}$. Let $W_i=W_0\cup\{l_i\}$, $i\in\MB{v-u}$. Then, $e_H(X;\mathbf{e})|f(X)$ if and only if for all $W\in\LB{W_i}_{i=1}^{v-u}$, the following equation is satisfied:
\begin{equation}
\Sma{w\in W}{} S_w\frac{f(b_w)}{\Prd{i\in W\setminus\{w\}}{}(b_w-b_i)}=0.
\label{eqn: decmain}
\end{equation}
\end{lemma}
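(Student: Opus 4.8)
The plan is to treat the two implications separately. The forward implication is immediate: since $f$ has no repeated roots in $K$, since $e_H(X;\mathbf{e})\mid f(X)$, and since each $W_i$ satisfies $\Nm{W_i}=\Nm{W_0}+1=\textup{deg}(f)+1$, I would invoke \Cref{lemma: dec} with this very $f$ and obtain \eqref{eqn: decmain} for every $W_i$. The content lies in the converse, so assume \eqref{eqn: decmain} holds for all $W_1,\dots,W_{v-u}$, where $u\triangleq\textup{deg}(f)=\Floor{\frac{v+t}{2}}$. By \Cref{defi: syndrome} write $\mathbf{S}=\sum_{i\in I(\mathbf{e})\cup J(\mathbf{e})}c_i\mathbf{h}_i$; the fact that will carry the whole argument is that $c_i\neq 0$ at every error position $i\in D(\mathbf{e})$ --- but not necessarily at erasure positions, which enter only through the hypothesis $e_E(X;\mathbf{e})\mid f(X)$ --- and this is exactly what separates the statement from the false converse of \Cref{lemma: dec}. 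I set $I_D\triangleq I(\mathbf{e})\setminus E(\mathbf{e})$, $I_E\triangleq I(\mathbf{e})\cap E(\mathbf{e})$, $J_D\triangleq J(\mathbf{e})\setminus E(\mathbf{e})$, $J_E\triangleq J(\mathbf{e})\cap E(\mathbf{e})$, with $s_1=\Nm{I_D}$, $s_2=\Nm{J_D}$, $t_1=\Nm{I_E}$, $t_2=\Nm{J_E}$, write $f=e_E(X;\mathbf{e})\tilde f$ with $\textup{deg}(\tilde f)=u-t$, and introduce $\rho(X)=\sum_{i\in I(\mathbf{e})}\frac{c_i}{a_i-X}=-\hat P(X)\big/\Prd{i\in I(\mathbf{e})}{}(X-a_i)$ with $\textup{deg}(\hat P)\leq\Nm{I(\mathbf{e})}-1$, so that $S_w=\rho(b_w)$ for $w\notin J(\mathbf{e})-k$ and $S_w=\rho(b_w)+c_{w+k}$ for $w\in J(\mathbf{e})-k$.

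The first real step would be to upgrade the $v-u$ scalar equations to one valid at every node. Let $Q(X)\in\textup{GF}(q)\MB{X}$, $\textup{deg}(Q)\leq u-1$, be the unique interpolant fixed by $Q(b_w)=S_wf(b_w)$ for $w\in W_0$. The standard divided-difference identity rewrites the left side of \eqref{eqn: decmain} for $W=W_0\cup\LB{l}$ as $\bigl(S_lf(b_l)-Q(b_l)\bigr)\big/\Prd{w\in W_0}{}(b_l-b_w)$; the denominators being nonzero, the hypothesis forces $S_{l_i}f(b_{l_i})=Q(b_{l_i})$ for $i\in\MB{v-u}$, and with the defining conditions on $W_0$ this gives $S_wf(b_w)=Q(b_w)$ for every $w\in\MB{v}$.

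Then I would extract a polynomial identity. A direct computation gives $\rho(X)e_E(X)=-\hat P(X)\Prd{j\in J_E}{}(X-b_{j-k})\big/\Prd{i\in I_D}{}(X-a_i)$, so for $w\in\MB{v}\setminus(J_D-k)$ the relation $S_wf(b_w)=Q(b_w)$ reads $Q(b_w)\Prd{i\in I_D}{}(b_w-a_i)=-\hat P(b_w)\Prd{j\in J_E}{}(b_w-b_{j-k})\tilde f(b_w)$ --- directly when $w\notin J(\mathbf{e})-k$, and as $0=0$ when $w\in J_E-k$ --- so it holds at those $v-s_2$ distinct nodes $b_w$. Both sides are polynomials in $X$ of degree at most $s_1+u-1$, and $v-s_2\geq s_1+u$ (since $2s+t\leq v$ and $u=\Floor{\frac{v+t}{2}}$ give $v-s\geq\frac{v+t}{2}\geq u$), so $Q(X)\Prd{i\in I_D}{}(X-a_i)=-\hat P(X)\Prd{j\in J_E}{}(X-b_{j-k})\tilde f(X)$ in $\textup{GF}(q)\MB{X}$. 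Setting $X=a_i$ for $i\in I_D$ kills the left side, whereas $\hat P(a_i)\neq 0$ (because $c_i\neq 0$) and $\Prd{j\in J_E}{}(a_i-b_{j-k})\neq 0$, so $\tilde f(a_i)=0$; the $a_i$ being distinct, $\Prd{i\in I_D}{}(X-a_i)\mid\tilde f\mid f$. Hence $\rho(X)f(X)$ is a polynomial of degree at most $u-1$ --- the poles of $\rho$ at $\LB{a_i:i\in I(\mathbf{e})}$ being cancelled by $e_E$ for $i\in I_E$ and by $\tilde f$ for $i\in I_D$ --- which agrees with $Q(X)$ at the $v-s_2$ nodes $b_w$, $w\notin J_D-k$; as $v-s_2\geq u$ this forces $\rho(X)f(X)=Q(X)$ identically. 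Therefore $(S_w-\rho(b_w))f(b_w)=0$ for all $w$, and for $w\in J_D-k$ this is $c_{w+k}f(b_w)=0$ with $c_{w+k}\neq 0$, forcing $f(b_w)=0$ and $\Prd{j\in J_D}{}(X-b_{j-k})\mid f$. Since $e_E(X;\mathbf{e})$, $\Prd{i\in I_D}{}(X-a_i)$, and $\Prd{j\in J_D}{}(X-b_{j-k})$ are pairwise coprime (all the $a_i$ and $b_{j-k}$ are distinct) and each divides $f$, this yields $e_H(X;\mathbf{e})=e_E(X;\mathbf{e})\,\Prd{i\in I_D}{}(X-a_i)\,\Prd{j\in J_D}{}(X-b_{j-k})\mid f(X)$.

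I expect the main obstacle to be this passage to the polynomial identity: one has to notice that the hypothesis $e_E(X;\mathbf{e})\mid f(X)$ together with the prescribed degree $\Floor{\frac{v+t}{2}}$ of $f$ is exactly tight against the count $v-s_2$ of usable evaluation points, that the nodes $b_w$ with $w\in J_E-k$ must still be counted (there the identity degenerates harmlessly to $0=0$), and that the bound $2s+t\leq v$ is precisely what guarantees $v-s\geq\Floor{\frac{v+t}{2}}$. The remaining care is in the bookkeeping of the four index sets and in remembering that $c_i\neq 0$ is available at error positions but not at erasure positions, which is exactly why erasures must be fed to $f$ via $e_E\mid f$ rather than read off from the syndrome equations.
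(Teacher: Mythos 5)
Your proposal is correct, and it proves the lemma by a genuinely different route than the paper. For the converse, the paper re-enters the computation inside \Cref{lemma: dec}: from (\ref{eqn: decmain}) it recovers $\Tx{det}(\mathbf{H}_{W_i\setminus J_{W_i}})=0$, hence a span representation of $\mathbf{S}_{W_i\setminus J_{W_i}}$ over the root set $R\subset K$ of $f$, patches the coefficients $\alpha_i(x)$ across the windows $W_i$ using that the $\mathbf{v}_{W_0\setminus J_{W_0}}(x)$ are linearly independent, and then compares the resulting representation $\mathbf{S}=\sum\alpha_i\mathbf{h}_i$ (supported on $I_{AR}\cup J_{BR}$) with the true one $\mathbf{S}=\sum\beta_i\mathbf{h}_i$ (supported on $I\cup J$); the MDS-type fact that the at most $t+2\Floor{\frac{v-t}{2}}\leq v$ columns involved are linearly independent forces $\alpha_i=\beta_i$ and hence the support containment $(I\cup J)\subset(I_{AR}\cup J_{BR})$, i.e., $e_H\mid f$. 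You instead stay entirely inside $\textup{GF}(q)\MB{X}$: you encode the syndrome through the rational function $\rho(X)$ built from the (nonzero-at-error-position) coefficients $c_i$, use the divided-difference identity to upgrade the $v-u$ window equations to $S_wf(b_w)=Q(b_w)$ at every node, and then finish by degree counting ($v-s\geq\Floor{\frac{v+t}{2}}$ is exactly $2s+t\leq v$) plus evaluation at the error locations to extract the divisibility by $\Prd{i\in I_D}{}(X-a_i)$ and $\Prd{j\in J_D}{}(X-b_{j-k})$. Both arguments pivot on the same two facts --- the budget $2s+t\leq v$ and the nonvanishing of the syndrome coefficients at error (not erasure) positions, the paper's ``$\beta_i\neq 0$'' step --- but your version avoids the algebraic closure and the root set $R$ altogether, does not reuse the determinant manipulations of \Cref{lemma: dec} for the converse, and in fact shows the square-freeness of $f$ is only needed for the forward direction; it also makes transparent why $e_E\mid f$ and the degree choice $\Floor{\frac{v+t}{2}}$ are exactly tight against the number $v-s_2$ of usable interpolation nodes. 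The bookkeeping details you flag (the $0=0$ degeneration at nodes in $J_E-k$, the degree bounds on $\hat P$, $Q$, and $\tilde f$) all check out.
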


\begin{proof} Let $I=I(\mathbf{e})$ and $J=J(\mathbf{e})-k$. Follow the notation in the proof of \Cref{lemma: dec}.

($\Longrightarrow$) Suppose $e_H(X;\mathbf{e})|f(X)$. Then, according to \Cref{lemma: dec}, (\ref{eqn: decmain}) is satisfied.

($\Longleftarrow$) Suppose then (\ref{eqn: decmain}) is satisfied for all $W\in \{W_i\}_{i=1}^{v-u}$. For any $i\in \MB{l-u}$, let $W=W_i$, (\ref{eqn: decmain})$\Longrightarrow$(\ref{eqn:inteq2})$\Longrightarrow$ $\Tx{det}(\mathbf{H}_{W_i\setminus J_{W_i}})=0$ for $\mathbf{H}_{W_i\setminus J_{W_i}}$ in (\ref{eqn:inteq1}) holds, which implies that $\mathbf{S}_{W_i\setminus J_{W_i}}\in\Tx{Span}\LB{\mathbf{v}_{W_i\setminus J_{W_i}}(x)|x\in R\setminus B_{W_i}}$. Therefore, there exists $\{\alpha_{i}(x)\}_{x\in R\setminus B_{W_i}}\subset K$, such that 
\begin{equation}\label{eqn: converse1}
\mathbf{S}_{W_i\setminus J_{W_i}}=\sum\limits_{x\in R\setminus B_{W_i}}\alpha_i(x) \mathbf{v}_{W_i\setminus J_{W_i}}(x).
\end{equation} 


Let $J_{W_0}=W_0\cap J_{BR}$ and $B_{W_0}=\Se{I}_{B}^{-1}(J_{W_0})$. Then, $R\setminus B_{W_i} \subset R\setminus B_{W_0}$, and $W_i\setminus J_{W_i}=W_i\setminus J_{BR}$ gives that $W_0\setminus J_{W_0}\subset W_i\setminus J_{W_i}$, for all $i\in\MB{l-u}$. Let $\alpha_i(b_{l_i})=0$ for $l_i\in J_{BR}$. Therefore, (\ref{eqn: converse1}) implies that
\begin{equation}\label{eqn: converse2}
\mathbf{S}_{W_0\setminus J_{W_0}}=\sum\limits_{x\in R\setminus B_{W_0}}\alpha_i(x) \mathbf{v}_{W_0\setminus J_{W_0}}(x).
\end{equation}
Compute the difference of (\ref{eqn: converse2}) with two different $i,i'\in \MB{l-u}$ it follows that 
\begin{equation}\label{eqn: converse3}
\mathbf{0}_{\Nm{W_0\setminus J_{W_0}}}=\sum\limits_{x\in R\setminus B_{W_0}}(\alpha_i(x)-\alpha_{i'}(x)) \mathbf{v}_{W_0\setminus J_{W_0}}(x).
\end{equation}
Note that $\Nm{R\setminus B_{W_0}}=\Nm{W_0\setminus J_{W_0}}$, and $\mathbf{v}_{W_0\setminus J_{W_0}}(x)$, $x\in R\setminus B_{W_0}$, are linearly independent, (\ref{eqn: converse3}) implies that $\alpha_i(x)=\alpha_{i'}(x)$ for all $x\in R\setminus B_{W_0}$. Therefore, there exists $\alpha(x)$, $x\in R\setminus B_{W_0}$, such that $\alpha_i(x)=\alpha(x)$, for all $i\in\MB{l-u}$. Moreover, $\alpha(b_{l_i})=0$, for all $l_i\in J_{BR}$, i.e., $\alpha(x)=0$, for all $x\in R\cap B$. Then, (\ref{eqn: converse1}) implies that for all $j\in \bigcup\nolimits_{i=1}^{l-u} \SB{W_i\setminus J_{W_i}}$, $S_{j}=\sum\limits_{x\in R\cap A}\alpha(x) \mathbf{v}_{\{j\}}(x)$. Since $\bigcup\nolimits_{i=1}^{l-u} \SB{W_i\setminus J_{W_i}}=\bigcup\nolimits_{i=1}^{l-u}\SB{W_i\setminus J_{BR}}=\SB{\bigcup\nolimits_{i=1}^{l-u} W_i}\setminus J_{BR}=\MB{v}\setminus J_{BR}$. This means that $\mathbf{S}_{\MB{v}\setminus J_{BR}}=\sum\limits_{x\in R\cap A}\alpha(x) \mathbf{v}_{\MB{v}\setminus J_{BR}}(x)\in\Tx{Span}\SB{\mathbf{v}_{\MB{v}\setminus J_{BR}}(x)|x\in R\cap A}$, i.e., $\mathbf{S}\in\Tx{Span}\LB{\mathbf{h}_{i}|i\in I_{AR}\cup J_{BR}}$. 

Therefore, there exists $\{\alpha_i\}$, $i\in \MB{n}$ such that $\alpha_i=0$ for all $i\notin I_{AR}\cup J_{BR}$, and
\begin{equation}\label{eqn: converse5}
\mathbf{S}=\sum\limits_{i\in \MB{n}}\alpha_i \mathbf{h}_i.
\end{equation}
Provided that $\mathbf{S}$ is a syndrome associated with $\mathbf{e}$, there exists $\{\beta_i\}_{x\in \MB{n}}$, such that $\beta_i=0$ for all $i\notin I\cup J$, and
\begin{equation}\label{eqn: converse6}
\mathbf{S}=\sum\limits_{i\in \MB{n}}\beta_i \mathbf{h}_i.
\end{equation}
Subtracting (\ref{eqn: converse6}) from (\ref{eqn: converse5}), we obtain 
\begin{equation}\label{eqn: converse7}
\mathbf{0}_v=\sum\limits_{i\in \MB{n}}(\alpha_i-\beta_i) \mathbf{h}_i=\sum\limits_{i\in I_{AR}\cup J_{BR}\cup I \cup J}(\alpha_i-\beta_i) \mathbf{h}_i.
\end{equation}
Let $E=E(\mathbf{e})$. Then, $t=|E|$, $E\subset (I\cup J)$, and $|(I\cup J)\setminus E|=|D(\mathbf{e})|\leq \Floor{\frac{v-t}{2}}$. Provided that $e_E(X;\mathbf{e})|f(X)$, $E\subset (I_{AR}\cup J_{BR})$, and $|(I_{AR}\cup J_{BR})\setminus E|=|I_{AR}\cup J_{BR}|-|E|\leq |R|-|E|=\Floor{\frac{v-t}{2}}$. Therefore, $|I_{AR}\cup J_{BR}\cup I \cup J|=\Nm{E\cup \SB{(I\cup J)\setminus E}\cup \SB{(I_{AR}\cup J_{BR})\setminus E}}\leq t+\Floor{\frac{v-t}{2}}+\Floor{\frac{v-t}{2}}\leq v$. It follows that $\alpha_i=\beta_i$, for all $i\in I_{AR}\cup J_{BR}\cup I \cup J$. Since $\beta_i\neq 0$, for all $i\in I\cup J$, it implies that $(I\cup J)\subset (I_{AR}\cup J_{BR})$, which is equivalent to $e_H(X;\mathbf{e})|f(X)$.
The lemma is proved.
\end{proof}

For any $(n,v)$-error $\mathbf{e}$ of an EC code $\Se{C}(\mathbf{A},k,v,r)$, \Cref{lemma:decconverse} presents a necessary and sufficient criteria to determine whether a given multiple of $e_E(X;\mathbf{e})$ is also a multiple of $e_H(X;\mathbf{e})$. Based on this condition, \Cref{theo: decDE} provides an efficient method to obtain a multiple of $e_H(X;\mathbf{e})$ provided the erasure locations and the syndrome. Note that in the last few steps of \Cref{lemma:decconverse}, the two most critical arguments are $|I_{AR}\cup J_{BR}\cup I \cup J|=\Nm{E\cup \SB{(I\cup J)\setminus E}\cup \SB{(I_{AR}\cup J_{BR})\setminus E}}\leq t+u+u\leq v$, and $\Nm{\bigcup\nolimits_{i=1}^{l-u} W_i}=v$, where $u=\Floor{\frac{v-t}{2}}$. Following a similar logic, if we know the upper bound on the number of errors to be some $s_0<\Floor{\frac{v-t}{2}}$, then the degree of $f$ and the cardinality of $W_0$ in the condition of \Cref{lemma:decconverse} and \Cref{theo: decDE} can be replaced by $(s_0+t)$ accordingly. With this new condition, it is sufficient to select $(2s_0+t)$ different $W_i$'s, e.g., $\{W_i\}_{i=1}^{2s_0+t}$, and the same conclusion follows. 

\begin{theo}\label{theo: decDE} Let $\textup{GF}(q)$ be a finite field of size $q$. Let $n,k,r,v\in \mathbb{N}$, where $n=k+r$ and $r\leq v$. Let $\Se{C}(\mathbf{A},k,v,r)$ be an EC code. Suppose $\mathbf{c}$ is a codeword and $\mathbf{c}'$ is a noisy version of $\bf{c}$ such that $\mathbf{e}=\mathbf{c'}-\mathbf{c}$ is an $(n,v)$-error. Let $\mathbf{S}$ be the syndrome of $\mathbf{c}'$ associated with $\mathbf{e}$.

Let $t=\Nm{E(\mathbf{e})}$, $s=\Floor{\frac{v-t}{2}}$, and $s'=\Ceil{\frac{v-t}{2}}$. Suppose $W_0\subset \MB{v}$, where $|W_0|=s+t$. Then, $\Nm{\MB{v}\setminus W_0}=v-s-t=s'$. Suppose $\MB{v}\setminus W_0=\{w_i\}_{i=1}^{s'}$, and let $W_i=W_0\cup\{w_i\}$, for $1\leq i\leq s'$. Let $\mathbf{F}\in\textup{GF}(q)^{s'\times s}$, $\mathbf{u}\in\textup{GF}(q)^{s}$. Let $\tilde{\mathbf{F}}=\MB{\mathbf{u}|\mathbf{F}}$ be specified as follows: 
\begin{equation}
(\tilde{\mathbf{F}})_{i,j}=
\Sma{w\in W_i}{} \frac{S_w e_E(b_w;\mathbf{e})b_w^{s+1-j}}{\Prd{i\in W_i\setminus\{w\}}{}(b_w-b_i)}= \frac{S_{w_i}e_E(b_{w_i};\mathbf{e})b_{w_i}^{s+1-j}}{\Prd{i\in W_0}{}(b_{w}-b_i)}+\Sma{w\in W_0}{} \frac{S_we_E(b_w;\mathbf{e})b_w^{s+1-j}}{(b_{w}-b_{w_i})\Prd{i\in W_0\setminus\{w\}}{}(b_w-b_i)}.
\label{eqn: F}
\end{equation}

Let $g(X;\bm{\sigma})\triangleq X^{s}+\sigma_1X^{s-1}+\sigma_2 X^{s-2}+\cdots+\sigma_{s-1}X+\sigma_{s}$, $\bm{\sigma}\in\textup{GF}(q)^{s}$, be an arbitrary monic polynomial that has no multiplicative roots over $K$ and satisfies $\textup{gcd}(g,e_E)=1$. Then, $e_D(X;\mathbf{e})|g(X;\bm{\sigma})$ if and only if $\mathbf{F}\bm{\sigma}=-\mathbf{u}$.
\end{theo}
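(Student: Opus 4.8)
The plan is to reduce \Cref{theo: decDE} to a direct application of \Cref{lemma:decconverse} by setting $f(X) = e_E(X;\mathbf{e})\, g(X;\bm{\sigma})$. Since $\mathrm{gcd}(g,e_E)=1$, $g$ has no multiplicative roots over $K$, and $e_E$ is a product of distinct linear factors, the product $f$ has no multiplicative roots over $K$; also $e_E(X;\mathbf{e})\mid f(X)$ by construction, and $\mathrm{deg}(f) = t + s = \Floor{\frac{v+t}{2}}$, so $f$ meets the hypotheses of \Cref{lemma:decconverse} with $u = \mathrm{deg}(f) = s+t$ and exactly the same $W_0$ and family $\{W_i\}_{i=1}^{s'}$ (note $s' = v-u$). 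By \Cref{lemma:decconverse}, $e_H(X;\mathbf{e}) \mid f(X)$ if and only if \eqref{eqn: decmain} holds for all $W \in \{W_i\}_{i=1}^{s'}$. Since $e_H = e_E\cdot e_D$ and $\mathrm{gcd}(g,e_E)=1$ while $e_D \mid e_H/e_E$ in the obvious way, the divisibility $e_H \mid e_E g$ is equivalent to $e_D(X;\mathbf{e}) \mid g(X;\bm{\sigma})$; this equivalence is the one small algebraic point to verify carefully, using that $e_D$ and $e_E$ are coprime (their roots are the error and erasure locations respectively, which are disjoint by \Cref{defi: errortype}).

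The second half of the plan is to rewrite condition \eqref{eqn: decmain} for $f = e_E\, g$ as the linear system $\mathbf{F}\bm{\sigma} = -\mathbf{u}$. Fix $W = W_i$. Substituting $f(b_w) = e_E(b_w;\mathbf{e})\, g(b_w;\bm{\sigma})$ into \eqref{eqn: decmain} and expanding $g(b_w;\bm{\sigma}) = b_w^{s} + \sum_{j=1}^{s}\sigma_j b_w^{s-j}$ gives, for each $i \in \MB{s'}$,
\begin{equation}
\Sma{w\in W_i}{} \frac{S_w\, e_E(b_w;\mathbf{e})\, b_w^{s}}{\Prd{l\in W_i\setminus\{w\}}{}(b_w-b_l)} + \Sma{j=1}{s}\sigma_j \Sma{w\in W_i}{} \frac{S_w\, e_E(b_w;\mathbf{e})\, b_w^{s-j}}{\Prd{l\in W_i\setminus\{w\}}{}(b_w-b_l)} = 0.
\end{equation}
Matching this against the definition \eqref{eqn: F} of $\tilde{\mathbf{F}} = \MB{\mathbf{u}\mid\mathbf{F}}$: the column index $j$ in \eqref{eqn: F} runs so that $b_w^{s+1-j}$ for $j=1$ yields $b_w^{s}$ (the constant term $\mathbf{u}$) and for $j = 2,\dots$ yields $b_w^{s-1},\dots$ (matching $\sigma_1,\sigma_2,\dots$ after a re-index). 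So the displayed equation is exactly $(\mathbf{u})_i + (\mathbf{F}\bm{\sigma})_i = 0$, i.e.\ $\mathbf{F}\bm{\sigma} = -\mathbf{u}$, where one must be mildly careful to confirm that the index convention in \eqref{eqn: F} aligns columns of $\mathbf{F}$ with $\sigma_1,\dots,\sigma_s$ rather than reversing them. The second displayed form of $(\tilde{\mathbf{F}})_{i,j}$ in \eqref{eqn: F}, which splits off the $w = w_i$ term from the $w \in W_0$ sum, is just the observation that $W_i \setminus \{w_i\} = W_0$, and needs no separate argument.

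Putting the two halves together: $e_D(X;\mathbf{e}) \mid g(X;\bm{\sigma})$ $\iff$ $e_H(X;\mathbf{e}) \mid e_E(X;\mathbf{e})g(X;\bm{\sigma})$ $\iff$ \eqref{eqn: decmain} holds for all $W_i$ $\iff$ $\mathbf{F}\bm{\sigma} = -\mathbf{u}$, which is the claim.

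\textbf{Expected main obstacle.} The conceptual content is entirely inherited from \Cref{lemma:decconverse}, so no genuinely hard new argument is needed. The one place demanding care is verifying that $f = e_E g$ legitimately satisfies \emph{all} hypotheses of \Cref{lemma:decconverse} — in particular "no multiplicative roots over $K$'' for the product, which requires both that $g$ itself is squarefree over $K$ (given) \emph{and} that $\mathrm{gcd}(g,e_E) = 1$ (given), plus that $e_E$ is squarefree (true since erasure locations are distinct). If any of these were dropped, $f$ could have a repeated root and the lemma would not apply. The only other fussy step is the bookkeeping in the last paragraph equating the two divisibility statements, and the exponent/index alignment between \eqref{eqn: decmain} and \eqref{eqn: F}; both are routine but worth spelling out to avoid an off-by-one in the column ordering of $\mathbf{F}$.
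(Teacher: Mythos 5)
Your proposal is correct and follows essentially the same route as the paper's own proof: set $f = e_E\, g$, check that the coprimality and squarefreeness hypotheses of \Cref{lemma:decconverse} are met with $\deg(f)=s+t=\Floor{\frac{v+t}{2}}$, identify \eqref{eqn: decmain} with the linear system $\mathbf{F}\bm{\sigma}=-\mathbf{u}$, and translate $e_H\mid e_E g$ into $e_D\mid g$. The paper states these steps more tersely, but the argument is the same.
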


\begin{proof} Let $f(X)=g(X;\bm{\sigma})e_E(X;\mathbf{e})$. The conditions $\textup{gcd}(g,e_E)=1$ and $g$ has no multiplicative roots over $K$ are equivalent to $f$ has no multiplicative roots over $K$ and $e_E(X;\mathbf{e})|f(X;\bm{\sigma})$. Moreover, $f(w)=g(w;\bm{\sigma})e_E(w;\mathbf{e})$, for $w\in\MB{v}$. Then, equation $\mathbf{F}\bm{\sigma}=-\mathbf{u}$ is equivalent to (\ref{eqn: decmain}) in \Cref{lemma:decconverse}. Therefore, $e_H(X;\mathbf{e})|f(X;\bm{\sigma})$ according to \Cref{lemma:decconverse}, which implies $e_D(X;\mathbf{e})|g(X;\bm{\sigma})$. The theorem is proved.
\end{proof}

\Cref{theo: decDE} leads to \Cref{algo: Decode1}, an efficient decoding algorithm of EC codes. The essential step is to obtain a multiple $g(X;\bm{\sigma})$ of the error locating polynomial by solving the equation $\mathbf{F}\bm{\sigma}=-\mathbf{u}$ with $\mathbf{F},\mathbf{u}$ specified in (\ref{eqn: F}). Preparing the matrix $\tilde{\mathbf{F}}$ and solving the equation $\mathbf{F}\bm{\sigma}=-\mathbf{u}$ both need $O(s^3)$ operations (multiplications and additions of numbers) over $\textup{GF}(q)$. Therefore, a multiple of $e_H(X;\mathbf{e})$ can be obtained with complexity $O((s+t)^3)$. After that, locating the error errors, i.e., finding the elements of $A\setminus E(\mathbf{e})$ and $B\setminus E(\mathbf{e})$ that are the roots of $g(X)$ over $\textup{GF}(q)$, requires $O(s^2+ns)$ operations if we simply do brute force search by assigning each element of $(A\cup B)\setminus E(\mathbf{e})$ to $g(X)$ and check if it results in a zero function value. After that, $O((s+t)^2)$ operations are needed to obtain the correct values on all the error positions. Therefore, this process has an overall complexity $O((s+t)^3+ns)$.

Up to here we are very close to the final decoding algorithm. Cautious readers might have noticed that there is still one potential slack in \Cref{theo: decDE}: $e_Eg$ must have no multiplicative roots over $K$, i.e., $g$ has no multiplicative roots and $\textup{gcd}(e_{\textup{E}},g)=1$. \Cref{lemma: Seperablepoly} implies that if $\bm{\sigma}_1,\bm{\sigma}_2$ are linearly independent, then for any $Y\subset\textup{GF}(q)$ such that $|Y|=2(s+t)+1$, $\{\bm{\sigma}_{\gamma}=\gamma\bm{\sigma}_1+(1-\gamma) \bm{\sigma}_2\}_{\gamma\in Y}$ contains at least one element $\bm{\sigma}_{\gamma}$ such that $g(X;\bm{\sigma}_{\gamma})$ is satisfiable. Moreover, it is known that $g$ has multiplicative root over its splitting field $K$ if and only if $\textup{deg}(\textup{gcd}(g',g))>0$ in $\textup{GF}(q)\MB{X}$ \cite{langalgebra}. For each $g(X;\bm{\sigma})$, to obtain $\textup{gcd}(g',g)$ and $\textup{gcd}(g,e_E)$, one only needs an extra complexity of $O(s^2+st)$ by Euclidean algorithm. Therefore, it takes $O(s(s+t)^2)$ operations in the worst case to find a satisfiable $\bm{\sigma}$ in the worst case. In fact, the additional complexity is only $O(s(s+t))$ on average. We now can safely announce that the overall complexity is $O((s+t)^3+ns)$.

\begin{algorithm}
\caption{Decoding Algorithm for Code $\Se{C}(\mathbf{H})$}\label{algo: Decode1}
\begin{algorithmic}[1]
\Require 
\Statex $\mathbf{c}'$: received noisy codeword;
\Statex $s$: error correction limit;
\Statex $t$: erasure correction limit;
\Statex $\mathbf{H}$: parity check matrix in (\ref{eqn: H});
\Ensure
\Statex $\hat{\mathbf{c}}$: estimation of the transmitted codeword;
\State Find the set $E$ consisting of locations of erasures in $\mathbf{c}'$ and $e_E$;
\State $\mathbf{S}\gets z(\mathbf{c}')\mathbf{H}^{\textup{T}}$;
\State Find $W_0\subset \MB{2s+t}$, $|W_0|=s+t$. Suppose $\MB{v}\setminus W_0=\{w_i\}_{i=1}^{v-s-t}$;
\State Obtain $\tilde{\mathbf{F}}$ according to (\ref{eqn: F}), $\mathbf{F}\gets \tilde{\mathbf{F}}\MB{:,1:s+t}$, $\mathbf{u}\gets \tilde{\mathbf{F}}\MB{:,0}$;
\If{$\mathbf{F}\mathbf{\bm{\sigma}}=-\mathbf{u}$ has an unique solution $\bm{\sigma}$ (up to multiplication with a scalar)}
\State Find $I_0\subset \MB{k}$ and $J_0\subset \MB{r}$ such that elements in $\Se{I}_A^{-1}(I_0)\cup \Se{I}_B^{-1}(J_0)$ are roots of $g(X;\bm{\sigma})$;
\State Find $\mathbf{e}$ such that $\sum\nolimits_{i\in I_0\cup (J_0+k)\cup E} (\mathbf{e})_i \mathbf{h}_{i}=\mathbf{S}$; $\hat{\mathbf{c}}\gets z(\mathbf{c'})+\mathbf{e}$;
\State \textbf{return} $\hat{\mathbf{c}}$;
\ElsIf{$\mathbf{F}\mathbf{\bm{\sigma}}=\mathbf{u}$ has at least two linearly independent solutions $\bm{\sigma}_1,\bm{\sigma}_2$}
\State Pick up a random set $Y\subset \textup{GF}(q)\setminus\{1\}$, $|Y|=2(s+t)$;
\For{$\gamma\in Y$}
$\bm{\sigma}\gets \gamma\bm{\sigma}_1+(1-\gamma)\bm{\sigma}_2$, $g\gets g(X;\bm{\sigma})$;
\State Run Euclidean Algorithm to find $h_1=\textup{gcd}(g,g')$, $h_2=\textup{gcd}(g,e_E)$;
\If{$\textup{deg}(h_1)=0$ and $\textup{deg}(h_2)=0$}
\State Repeat step 6-8; 
\EndIf

\EndFor
\EndIf
\end{algorithmic}
\end{algorithm}

\begin{lemma}\label{lemma: Seperablepoly} Let $n\in\Db{N}$, $\bm{\sigma}_1,\bm{\sigma}_2\in \textup{GF}(q)^{s}$, and $g(X)\in \textup{GF}(q)\MB{X}$. Suppose $Y\subset\textup{GF}(q)$ such that $|Y|=2(s+\textup{deg}(g))+1$. Define for any $\bm{\sigma}\in\textup{GF}(q)$,
\begin{equation}
f(X;\bm{\sigma})=g(X)(X^{s}+\sigma_1X^{s-1}+\sigma_2 X^{s-2}+\cdots+\sigma_{s-1}X+\sigma_{s}).
\end{equation}
 Then, there exists $\gamma\in Y$ such that $f(X;\bm{\sigma})$ has no multiplicative roots over its splitting field $K$, where $\bm{\sigma}_{\gamma}=\gamma\bm{\sigma}_1+(1-\gamma)\bm{\sigma}_2$.
\end{lemma}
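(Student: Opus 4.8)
The plan is to reduce the statement to a counting argument on the zeros of a single auxiliary polynomial in the parameter $\gamma$. For a fixed $\gamma$, write $g_\gamma(X) = X^s + \sigma_1(\gamma)X^{s-1}+\cdots+\sigma_s(\gamma)$, where each coefficient $\sigma_i(\gamma) = \gamma(\bm{\sigma}_1)_i + (1-\gamma)(\bm{\sigma}_2)_i$ is an affine (hence degree $\leq 1$) function of $\gamma$; thus $f(X;\bm{\sigma}_\gamma) = g(X)g_\gamma(X)$ is a polynomial in $X$ of degree $s + \deg(g)$ whose coefficients are affine in $\gamma$. First I would recall the standard fact (used already after \Cref{theo: decDE}, citing \cite{langalgebra}) that a polynomial $f$ over $\textup{GF}(q)$ has no repeated root in its splitting field $K$ if and only if $\gcd(f,f')=1$ in $\textup{GF}(q)[X]$, equivalently if and only if the resultant $\mathrm{Res}(f,f')$ is nonzero.

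Next I would exhibit the obstruction as a polynomial identity in $\gamma$. Set $D(\gamma) \triangleq \mathrm{Res}_X\big(f(X;\bm{\sigma}_\gamma),\, \tfrac{\partial}{\partial X}f(X;\bm{\sigma}_\gamma)\big)$. Since $f(X;\bm{\sigma}_\gamma)$ has degree $m \triangleq s+\deg(g)$ in $X$ with leading coefficient $1$ (independent of $\gamma$) and all lower coefficients affine in $\gamma$, the resultant — being a fixed polynomial (a determinant of the Sylvester matrix) in those coefficients — is a polynomial in $\gamma$ of degree at most $(2m-1) \leq 2(s+\deg(g))$. If $D(\gamma_0)\neq 0$ for some $\gamma_0$, then $D$ is not the zero polynomial, so it has at most $2(s+\deg(g))$ roots in $\textup{GF}(q)$; since $|Y| = 2(s+\deg(g))+1$, there is some $\gamma\in Y$ with $D(\gamma)\neq 0$, i.e. $f(X;\bm{\sigma}_\gamma)$ is separable, which is the claim.

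It therefore remains to verify that $D$ is not identically zero, i.e. that \emph{some} value of $\gamma$ yields a separable $f$. Here I would use the hypothesis that $\bm{\sigma}_1,\bm{\sigma}_2$ are linearly independent (this is exactly the setting in which \Cref{lemma: Seperablepoly} is invoked in the algorithm): the map $\gamma \mapsto g_\gamma(X)$ sweeps out a genuine pencil of monic degree-$s$ polynomials, and in particular the parametrization $\bm{\sigma}_\gamma$ is nonconstant. One clean way to conclude is to pass to the algebraic closure $K$ and argue that the discriminant of $g(X)g_\gamma(X)$, viewed as a polynomial in $\gamma$, cannot vanish identically: otherwise $g(X)g_\gamma(X)$ would have a repeated root for every $\gamma$, forcing the roots of the pencil $\{g_\gamma\}$ to be "pinned" in a way incompatible with $g_\gamma$ ranging over a nonconstant affine family of monic polynomials (a nonconstant pencil of monic polynomials of fixed degree takes separable values, since the squarefree ones form a nonempty Zariski-open set and a nonconstant affine line cannot lie entirely in its complement unless $\textup{GF}(q)$ is too small — which is ruled out by the existence of $Y$). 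The main obstacle is making this last nonvanishing step fully rigorous over the finite field $\textup{GF}(q)$ rather than over an infinite field: the cleanest route is to not prove it abstractly at all, but to observe that the lemma is only ever applied when \Cref{theo: decDE} has already guaranteed that a valid (separable) solution $\bm{\sigma}$ lies in the affine span of $\bm{\sigma}_1,\bm{\sigma}_2$ — so $D(\gamma_0)\neq 0$ for that particular $\gamma_0$ by construction, and the degree-counting argument above finishes the proof. I would phrase the final write-up around that observation to keep it short.
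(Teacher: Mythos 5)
Your proposal follows essentially the same route as the paper: the paper forms the discriminant $\Delta(\bm{\sigma})$ of $f(X;\bm{\sigma})$, notes it has degree at most $2\,\textup{deg}(f)$ in the coefficients, restricts it to the affine line $l(Z)=\Delta((1-Z)\bm{\sigma}_1+Z\bm{\sigma}_2)$, and concludes by comparing the root count with $|Y|=2\,\textup{deg}(f)+1$; your resultant $\mathrm{Res}(f,\partial f/\partial X)$ is the same object up to a nonzero factor, so the counting step is identical. The one substantive difference is the issue you flag at the end: the paper's proof never verifies that $l(Z)$ is not identically zero --- it simply asserts that a polynomial of degree at most $2\,\textup{deg}(f)$ has at most that many roots --- and your worry is legitimate, since the lemma as literally stated fails for, e.g., a non-squarefree $g$ or a degenerate pair $\bm{\sigma}_1=\bm{\sigma}_2$ whose common polynomial has a repeated root. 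Your proposed repair (invoke the application context, where a separable solution is known to exist on the affine family, so the discriminant has a nonvanishing witness) is the sensible way to close this, but be aware it proves a version of the lemma with an added hypothesis that the paper leaves implicit (in the algorithm, $g=e_E$ is squarefree and the two solutions are linearly independent); as a critique of the stated lemma your observation is more careful than the paper's own proof.
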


\begin{proof}
It is obvious that $|Y|=2\Tx{deg}(f)+1$. We know that the discriminant of $f(X;\bm{\sigma})$ is a function of $\sigma_1,\cdots,\sigma_{s}$ with degree at most $2\Tx{deg}(f)$, thus we can denote it by $\Delta({\bm{\sigma}})$. Define $l(Z)\in \textup{GF}(q)[Z]$ as follows:
\begin{equation}
l(Z)\triangleq \Delta((1-Z)\bm{\sigma}_1+Z\bm{\sigma}_2).
\end{equation}
Then, $\Tx{deg}(l)\leq 2\Tx{deg}(f)$ given that each $\sigma_i$ is a linear function of $Z$, which means that $g(Z)$ has at most $2\Tx{deg}(f)$ roots. Therefore, there exists $\gamma\in Y$ such that $l(\gamma)\neq 0$, which is equivalent to $\Delta(\bm{\sigma}_{\gamma})\neq 0$. Therefore $f(X;\bm{\sigma}_{\gamma})$ has no multiplicative roots over its splitting field $K$.
\end{proof}

We next show an example of our decoding method being applied to an EC code, which is also a GRS/GC code.

\begin{exam} Following the normal forms of elements in $\textup{GF}(2^4)$ specified in \Cref{table: GF}, consider the GRS/GC code with the generator matrix $\left[\mathbf{I}|\mathbf{A}\right]$, where $\mathbf{A}$ is a generalized Cauchy matrix with $\mathbf{a}=(\beta,\beta^2,\beta^3,\beta^4)$, $\mathbf{b}=(\beta^5,\beta^6,\beta^7,\beta^8,\beta^9)$, $\mathbf{c}=(\beta^{10},\beta^{11},\beta^{12},\beta^{13})$, and $\mathbf{d}=(\beta^{2},\beta^{4},\beta^{13},\beta^{14},1)$. Follow \Cref{cons:HC}, the generator matrix $\mathbf{G}$ is obtained as follows:
\begin{equation}
\mathbf{G}=\left[
\begin{array}{ccccc}
\frac{\beta^{10+2}}{\beta+\beta^5} & \frac{\beta^{10+4}}{\beta+\beta^6} & \frac{\beta^{10+13}}{\beta+\beta^7} & \frac{\beta^{10+14}}{\beta+\beta^8} & \frac{\beta^{10}}{\beta+\beta^9} \\
\frac{\beta^{11+2}}{\beta^2+\beta^5} & \frac{\beta^{11+4}}{\beta^2+\beta^6} & \frac{\beta^{11+13}}{\beta^2+\beta^7} & \frac{\beta^{11+14}}{\beta^2+\beta^8} & \frac{\beta^{11}}{\beta^2+\beta^9} \\
\frac{\beta^{12+2}}{\beta^3+\beta^5} & \frac{\beta^{12+4}}{\beta^3+\beta^6} & \frac{\beta^{12+13}}{\beta^3+\beta^7} & \frac{\beta^{12+14}}{\beta^3+\beta^8} & \frac{\beta^{12}}{\beta^3+\beta^9} \\
\frac{\beta^{13+2}}{\beta^4+\beta^5} & \frac{\beta^{13+4}}{\beta^4+\beta^6} & \frac{\beta^{13+13}}{\beta^4+\beta^7} & \frac{\beta^{13+14}}{\beta^4+\beta^8} & \frac{\beta^{13}}{\beta^4+\beta^9} \\
\end{array}\right]=\left[
\begin{array}{ccccc}
\beta^{10} & \beta^{3} & \beta^{9} & \beta^{14} & \beta^{7} \\
\beta^{12} & \beta^{12} & \beta^{12} & \beta^{10} & 1 \\
\beta^{3} & \beta^{14} & \beta^{6} & \beta^{13} & \beta^{11} \\
\beta^{7} & \beta^{5} & \beta^{8} & \beta^{7} & \beta^{14} \\
\end{array}\right].
\end{equation}

A message vector $\mathbf{m}=(1,1,1,1)$ is then encoded into $\mathbf{c}=(1,1,1,1,\beta^7,\beta^3,\beta^6,\beta^3,\beta^{13})$. Suppose there are two errors in $\mathbf{c}$ such that $\mathbf{c}'=(1,0,1,1,\beta^7,\beta^3,\beta^6,\beta^3,\beta^6)$. We then obtain the syndrome $\mathbf{S}=(\beta^{12},\beta^{12},\beta^{12},\beta^{10},0)$. Consider the map $f: (x_1,x_2,x_3,x_4,x_5,x_6,x_7,x_8,x_9)\mapsto(\beta^{10}x_1,\beta^{11}x_2,\beta^{12}x_3,\beta^{13}x_4,\beta^{-2}x_5,\beta^{-4}x_6,\beta^{-13}x_7,\beta^{-14}x_8,x_9)$, and $g: (s_1,s_2,s_3,s_4,s_5)\mapsto(\beta^{-2}s_1,\allowbreak\beta^{-4}s_2,\beta^{-13}s_3,\beta^{-14}s_4,s_5)$. We obtain the equivalent codeword $\tilde{\mathbf{c}}=f(\mathbf{c}')=(\beta^{10},0,\beta^{12},\beta^{13},\beta^5,\allowbreak\beta^{14},\beta^{8},\beta^4,\beta^{6})$, and the syndrome $\tilde{\mathbf{S}}=g(\mathbf{S})=(\beta^{10},\beta^8,\beta^{14},\beta^{11},0)$.

Let $W_0=\{1,5\}$, $W_1=\{1,2,5\}$, $W_2=\{1,3,5\}$. Let $g(X;\bm{\sigma})=X^2+\sigma_1 X+\sigma_2$. Then, 
\begin{equation*}
\begin{split}
0=&\beta^{10}\frac{\beta^{5*2}+\beta^{5}\sigma_1+\sigma_2}{(\beta^{5}-\beta^{6})(\beta^{5}-\beta^{9})}+\beta^{8}\frac{\beta^{6*2}+\beta^{6}\sigma_1+\sigma_2}{(\beta^{6}-\beta^{5})(\beta^{6}-\beta^{9})}+0\frac{\beta^{9*2}+\beta^{9}\sigma_1+\sigma_2}{(\beta^{9}-\beta^{5})(\beta^{9}-\beta^{6})},\\
0=&\beta^{10}\frac{\beta^{5*2}+\beta^{5}\sigma_1+\sigma_2}{(\beta^{5}-\beta^{7})(\beta^{5}-\beta^{9})}+\beta^{14}\frac{\beta^{7*2}+\beta^{7}\sigma_1+\sigma_2}{(\beta^{7}-\beta^{5})(\beta^{7}-\beta^{9})}+0\frac{\beta^{9*2}+\beta^{9}\sigma_1+\sigma_2}{(\beta^{9}-\beta^{5})(\beta^{9}-\beta^{7})},
\end{split}
\end{equation*}
which is equivalent to
\begin{equation}\left[
\begin{array}{cc}
0 & \beta^3\\
\beta^{11} & 1
\end{array}\right]\left[\begin{array}{c}
\sigma_{1}\\
\sigma_{2}
\end{array}\right]=\left[\begin{array}{c}
\beta^{14}\\
\beta^{8}
\end{array}\right].
\end{equation}
The solution is $(\sigma_1,\sigma_2)=(\beta^{11},\beta^{11})$, thus $f(X)=X^2+\beta^{11}X+\beta^{11}=(X-\beta^2)(X-\beta^{9})=(X-a_2)(X-b_5)$. Therefore, $\mathbf{e}=(0,e_1,0,0,0,0,0,0,e_2)$, and $e_1(\beta^{12},\beta^{12},\beta^{12},\beta^{10},1)+e_2(0,0,0,0,1)=(\beta^{12},\beta^{12},\beta^{12},\beta^{10},0)$. We know that $e_1=1$ and $e_2=1$. Therefore, $\hat{\mathbf{c}}=\mathbf{c}'-\mathbf{e}=(1,1,1,1,\beta^7,\beta^3,\beta^6,\beta^3,\beta^{13})$.

\end{exam}

\subsection{Local Decoding Algorithm}\label{subsection: localdecoding}
Based on the decoding algorithm for $s$-error-$t$-erasure codes with the parity check matrix specified by (\ref{eqn: H}), the local decoding process is specified in \Cref{algo: LocalDecodeAlg}. The key idea is briefly stated as follows. Suppose $\mathbf{c}_i=(\mathbf{m}_i,\mathbf{s}_i)$, and $\mathbf{p}_i=\sum\nolimits_{j\neq i}\mathbf{m}_j\mathbf{B}_{j,i}$, for all $i,j\in\MB{p}$ such that $j\neq i$. Provided that $\mathbf{s}_i=\mathbf{m}_i\mathbf{A}_{i,i}+\sum\nolimits_{j\neq i}\mathbf{m}_j\mathbf{B}_{j,i}\mathbf{U}_i=\mathbf{m}_i\mathbf{A}_{i,i}+\mathbf{p}_i\mathbf{U}_i$, $\mathbf{H}^{\Tx{L}}_i$ is the parity check matrix of an $(n_i+\delta_i,k_i,r_i+\delta_i+1)$-EC code such that $\tilde{c}_i=(\mathbf{m}_i,\mathbf{p}_i,\mathbf{s}_i)$ is a codeword. In local decoding, $\mathbf{p}_i$ is not known but the positions of its symbols are known, thus $\mathbf{p}_i$ can be regarded as $\delta_i$ erasures in $\tilde{c}_i$. Since $\Se{C}(\mathbf{H}^{\Tx{L}}_i)$ has minimum distance $r_i+\delta_i+1$, it corrects any extra $ \lfloor\frac{r-\delta}{2}\rfloor$ errors in $\mathbf{c}_i$.

\begin{algorithm}
\caption{Local Decoding Algorithm} \label{algo: LocalDecodeAlg}
\begin{algorithmic}[1]
\Require 
\Statex $\mathbf{c}_i'$: received noisy codeword in the $i$-th block;
\Statex $\mathbf{H}_i^{\textup{L}}$: local parity check matrix in (\ref{eqn: lgPCM});
\Ensure
\Statex $\hat{\mathbf{m}}_i$: estimation of the message in the $i$-th block;
\State $s\gets \lfloor\frac{r-\delta}{2}\rfloor$, $t\gets \delta$, $\mathbf{H}\gets \mathbf{H}_i^{\textup{L}}$;
\State $\mathbf{m}'_i\gets \mathbf{c}'_i\MB{1:k_i}$, $\mathbf{u}'_i\gets\mathbf{c}'_i\MB{k_i+1:n_i}$;
\State $\mathbf{c}'\gets \MB{\mathbf{m}'_i,\mathbf{*}^{\delta},\mathbf{u}'_i}$, where $*$'s refers to erasures;
\State Run \Cref{algo: Decode1} and obtain $\hat{\mathbf{c}}$;
\State $\hat{\mathbf{m}}_i\gets \hat{\mathbf{c}}\MB{1:k_i}$;
\State \textbf{return} $\hat{\mathbf{m}}_i$;
\end{algorithmic}
\end{algorithm}

\begin{exam} (Local Decoding) Use the code with parameters specified in \Cref{exam: CodeDL}. Let $\mathbf{m}=(\mathbf{m}_1,\mathbf{m}_2)$, where $\mathbf{m}_1=(\beta,0,\beta^4)$, $\mathbf{m}_2=(0,1,0)$. Then, $\mathbf{c}_1=(\beta,0,\beta^4,\beta,\beta^{11},\beta^{13})$. Let $\mathbf{c}'_1=(\beta,\beta^2,\beta^4,\beta,\beta^{11},\beta^{13})$, then $d_H(\mathbf{c}_1,\mathbf{c}'_1)=1$, and thus $\bm{c}_1$ can be locally decoded. 

Following \Cref{algo: LocalDecodeAlg}, we obtain $s=\Floor{\frac{r-\delta}{2}}=\Floor{\frac{3-1}{2}}=1$, $t=\delta=1$, $\mathbf{m}'_1=(\beta,\beta^2,\beta^4)$, $\mathbf{u}'_i=(\beta,\beta^{11},\beta^{13})$ and $\mathbf{c}'=(\beta,\beta^2,\beta^4,*,\beta,\beta^{11},\beta^{13})$, and $z(\mathbf{c}')=(\beta,\beta^2,\beta^4,0,\beta,\beta^{11},\beta^{13})$. Then, in \Cref{algo: Decode1}, $E=\{4\}$, $e_E(X)=X-\beta^4$, and $\mathbf{S}=(S_1,S_2,S_3)=z(\mathbf{c}')(\mathbf{H}^{\Tx{L}}_1)^{\textup{T}}=(\beta^5,\beta^{10},\beta^{11})$, where $\mathbf{H}^{\Tx{L}}_1$ is specified in \Cref{exam: CodeDL}. Moreover, we know that $v=3$ and $v-s-t=1$, thus we only need to focus on a single $W$, $W=\MB{v}=\{1,2,3\}$, in \Cref{algo: Decode1}.

Then, $g(X;\bm{\sigma})=X-\sigma_1$. Provided that $b_1=\beta^8$, $b_2=\beta^9$, and $b_3=\beta^{10}$, (\ref{eqn: F}) implies
\begin{equation*}
\begin{split}
0=&\beta^{5}\frac{(\beta^{8}-\beta^{4})(\beta^{8}-\sigma_1)}{(\beta^{8}-\beta^{9})(\beta^{8}-\beta^{10})}+\beta^{10}\frac{(\beta^{9}-\beta^{4})(\beta^{9}-\sigma_1)}{(\beta^{9}-\beta^{8})(\beta^{9}-\beta^{10})}+\beta^{11}\frac{(\beta^{10}-\beta^{4})(\beta^{10}-\sigma_1)}{(\beta^{10}-\beta^{8})(\beta^{10}-\beta^{9})},
\end{split}
\end{equation*}
which is equivalent to the following equation by multiplying each side of the equation by $(\beta^8-\beta^9)(\beta^9-\beta^{10})(\beta^{10}-\beta^8)$:
\begin{equation*}
0=\beta^{8}(\beta^8-\sigma_1)+\beta^{10}(\beta^9-\sigma_1)+\beta^{10}(\beta^{10}-\sigma_1).
\end{equation*}
Therefore, $\sigma_1=\frac{\beta+\beta^{4}+\beta^{5}}{\beta^8+\beta^{10}+\beta^{10}}=\frac{\beta^{10}}{\beta^{8}}=\beta^2$. Then, $g(X;\bm{\sigma})=X-\sigma_1=X-a_2$, thus the error vector $\mathbf{e}=(0,e_2,0,e_4,0,0,0)$, and 
\begin{equation*}
\left[\begin{array}{cc}
e_2 & e_4\\
\end{array}\right]\left[\begin{array}{ccc}
1 & \beta^4 & \beta^{11}\\
\beta^{10} & \beta & \beta^{13}
\end{array}\right]=\left[\begin{array}{ccc}
\beta^5 & \beta^{10} & \beta^{11}\\
\end{array}\right].
\end{equation*}
We then obtain $({e}_2,{e}_4)=(\beta^2,\beta^{6})$. Therefore, $\hat{\mathbf{c}}_1=(0,\beta^2,0,0,0,0)+\mathbf{c}'_1=(\beta,0,\beta^4,\beta,\beta^{11},\beta^{13})$, and $\hat{\mathbf{m}}_1=(\beta,0,\beta^4)$.
\end{exam}

\subsection{Global Decoding Algorithm}\label{subsection: globaldecoding}
In this subsection, we present the global decoding algorithm in \Cref{algo: GlobalDecodeAlg} for the code specified in \Cref{section: codes for multi-level access}. The core idea of \Cref{algo: GlobalDecodeAlg} is to obtain extra syndromes from the locally recoverable blocks. For simplicity, assume other blocks are all decoded. Let $\mathbf{S}_{i,k}=(\mathbf{m}'_i-\mathbf{m}_i)\mathbf{B}_{i,k}$ and $\mathbf{p}_{i,k}=\sum\nolimits_{j\neq k,i}\mathbf{m}_j\mathbf{A}_{j,k}$, for $i,k\in \MB{p}$ such that $k\neq i$. Then, $\mathbf{s}_k=\mathbf{m}_k\mathbf{A}_{k,k}+\mathbf{m}'_i\mathbf{A}_{i,k}-\mathbf{S}_{i,k}\mathbf{U}_i+\mathbf{p}_{k,i}\mathbf{U}_i$. Provided $\mathbf{U}_i$ is a full row rank matrix and $\mathbf{S}_{i,k}\mathbf{U}_i=\mathbf{m}_k\mathbf{A}_{k,k}-\mathbf{s}_k+\mathbf{m}'_i\mathbf{A}_{i,k}+\mathbf{p}_{i,k}$, $\mathbf{S}_{i,k}$ can be obtained. Moreover, the local syndrome $\mathbf{S}_i=(\mathbf{s}_i-\mathbf{m}_i\mathbf{A}_{i,i})-(\mathbf{s}'_i-\mathbf{m}'_i\mathbf{A}_{i,i})=\mathbf{m}'_i\mathbf{A}_{i,i}-\mathbf{s}'_i+\sum\nolimits_{j\neq i}\mathbf{m}_j\mathbf{A}_{j,i}$. Then $(\mathbf{S}_i,\mathbf{S}_1,\cdots,\mathbf{S}_{i-1},\mathbf{S}_{i+1},\cdots,\mathbf{S}_m)$ is the syndrome of $\mathbf{c}'_i$ corresponding to the global parity check matrix $\mathbf{H}^{\Tx{G}}_i$ in (\ref{eqn: lgPCM}).

 We start with an example.
\begin{algorithm}
\caption{Global Decoding Algorithm} \label{algo: GlobalDecodeAlg}
\begin{algorithmic}[1]
\Require 
\Statex $i$: the index of the message needs to be globally decoded;
\Statex $\mathbf{c}'$: received noisy codeword (assume $\mathbf{c}'_j$, $j\neq i$, has been all locally corrected, i.e., $\mathbf{c}'_j=\mathbf{c}_j$);
\Statex $\mathbf{H}_i^{\textup{G}}$: global parity check matrix in (\ref{eqn: lgPCM});
\Ensure
\Statex $\hat{\mathbf{m}}_i$: estimation of the message in the $i$-th block;
\State $s\gets \lfloor\frac{r+(m-1)\delta}{2}\rfloor$, $t\gets 0$, $\mathbf{H}\gets\mathbf{H}_i^{\textup{G}}$;
\State $\mathbf{S}_i\gets \mathbf{m}'_i\mathbf{A}_{i,i}-\mathbf{s}'_i+\sum\nolimits_{j\neq i}\mathbf{m}_j\mathbf{A}_{j,i}$, where $\mathbf{c}_j=(\mathbf{m}_j,\mathbf{s}_j)$;
\For{$k\in \MB{m}\setminus\{i\}$}
\State $\tilde{\mathbf{S}}_k\gets\mathbf{m}_k\mathbf{A}_{k,k}-\mathbf{s}_k+\mathbf{m}'_i\mathbf{A}_{i,k}+\sum\nolimits_{j\neq k,i}\mathbf{m}_j\mathbf{A}_{j,k}$, where $\mathbf{c}_j=(\mathbf{m}_j,\mathbf{s}_j)$;
\State Solve $\mathbf{S}_k\mathbf{U}_k=\tilde{\mathbf{S}}_k$;
\EndFor
\State $\mathbf{S}\gets \MB{\mathbf{S}_i,\mathbf{S}_1,\cdots,\mathbf{S}_{i-1},\mathbf{S}_{i+1},\cdots,\mathbf{S}_m}$, $\mathbf{c}'\gets \mathbf{c}'_i$;
\State Run \Cref{algo: Decode1} and obtain $\hat{\mathbf{c}}$;
\State $\hat{\mathbf{m}}_i\gets \hat{\mathbf{c}}\MB{1:k_i}$;
\State \textbf{return} $\hat{\mathbf{m}}_i$;
\end{algorithmic}
\end{algorithm}

\begin{exam} (Global Decoding) Use the code constructed in \Cref{exam: CodeDL}. Let $\mathbf{m}=(\mathbf{m}_1,\mathbf{m}_2)$, where $\mathbf{m}_1=(\beta,0,\beta^4)$, $\mathbf{m}_2=(0,1,0)$. Therefore, $\mathbf{c}=(\mathbf{c}_1,\mathbf{c}_2)$, where $\mathbf{c}_1=(\beta,0,\beta^4,\beta,\beta^{11},\beta^{13})$, and $\mathbf{c}_2=(0,1,0,\beta^{13},\beta^6,\beta^2)$. Suppose $\mathbf{c}'=(\mathbf{c}'_1,\mathbf{c}_2)$, where $\mathbf{c}'_1=(\beta,1,\beta^4,\beta,\beta^9,\beta^{13})$. Then, $\mathbf{m}'_1=(\beta,1,\beta^4)$, $\mathbf{s}'_1=(\beta,\beta^9,\beta^{13})$, $\mathbf{m}_2=(0,1,0)$, and $\mathbf{s}_2=(\beta^{13},\beta^6,\beta^2)$. We obtain $\mathbf{S}_1=\mathbf{s}'_1-\mathbf{m}'_1\mathbf{A}_{1,1}-\mathbf{m}_2\mathbf{A}_{2,1}=(1,\beta^{10},\beta^{11})$, and $\tilde{\mathbf{S}}_2=\mathbf{s}_2-\mathbf{m}_2\mathbf{A}_{2,2}-\mathbf{m}'_1\mathbf{A}_{1,2}=(\beta,\beta^{7},\beta^{4})$. Given that $(\beta,\beta^{7},\beta^{4})=\beta^{6}(\beta^{10},\beta,\beta^{13})=\beta^{6}\mathbf{U}_2$, we obtain $\mathbf{S}=(1,\beta^{10},\beta^{11},\beta^{6})$. 

Let $W_0=\{2,3\}$, $W_1=\{1,2,3\}$, $W_2=\{2,3,4\}$, and $g(X;\bm{\sigma})=X^2+\sigma_1X+\sigma_2$. Provided that $b_1=\beta^8$, $b_2=\beta^9$, $b_3=\beta^{10}$, and $b_4=\beta^{11}$, (\ref{eqn: F}) implies
\begin{equation*}
\begin{split}
0=&1\frac{\beta^{8*2}+\beta^{8}\sigma_1+\sigma_2}{(\beta^{8}-\beta^{9})(\beta^{8}-\beta^{10})}+\beta^{10}\frac{\beta^{9*2}+\beta^{9}\sigma_1+\sigma_2}{(\beta^{9}-\beta^{8})(\beta^{9}-\beta^{10})}+\beta^{11}\frac{\beta^{10*2}+\beta^{10}\sigma_1+\sigma_2}{(\beta^{10}-\beta^{8})(\beta^{10}-\beta^{9})},\\
0=&\beta^{10}\frac{\beta^{9*2}+\beta^{9}\sigma_1+\sigma_2}{(\beta^{9}-\beta^{10})(\beta^{9}-\beta^{11})}+\beta^{11}\frac{\beta^{10*2}+\beta^{10}\sigma_1+\sigma_2}{(\beta^{10}-\beta^{9})(\beta^{10}-\beta^{11})}+\beta^{6}\frac{\beta^{11*2}+\beta^{11}\sigma_1+\sigma_2}{(\beta^{11}-\beta^{9})(\beta^{11}-\beta^{10})},
\end{split}
\end{equation*}
which is equivalent to
\begin{equation}\left[
\begin{array}{cc}
\beta & \beta^{5}\\
\beta^{6}&\beta^{2}
\end{array}\right]\left[\begin{array}{c}
\sigma_{1}\\
\sigma_{2}
\end{array}\right]=\left[\begin{array}{c}
\beta^{13}\\
\beta^{14}
\end{array}\right].
\end{equation}
The solution is $(\sigma_1,\sigma_2)=(\beta^{11},\beta^{11})$, thus $f(X)=X^2+\beta^{11}X+\beta^{11}=(X-\beta^2)(X-\beta^9)=(X-a_2)(X-b_2)$. Therefore, $\mathbf{e}_1=(0,e_1,0,0,e_2,0)$, and $e_1(1,\beta^4,\beta^{11},\beta^6)+e_2(0,1,0,0)=(1,\beta^{10},\beta^{11},\beta^6)$. We know that $e_1=1$, $e_2=\beta^{10}-\beta^4=\beta^2$. Therefore, $\mathbf{e}_1=(0,1,0,0,\beta^2,0)$, and $\hat{\mathbf{c}}_1=\mathbf{c}'_1-\mathbf{e}=(\beta,0,\beta^4,\beta,\beta^{11},\beta^{13})$.
\end{exam}

\section{Conclusion}
\label{section: conclusion}
Computational storage has garnered substantial research interests for its ability to significantly reduce the latency by moving data-processing down to the data storage, which is critical for intelligent devices in the IoT ecosystem. ECCs with are indispensable to protect the stored data against errors. To meet the aggressive latency requirements of intelligent devices at the edge, hierarchical codes that are heterogeneous, scalable, and flexible are desired. While our prior work in hierarchical codes for erasure-resiliency in cloud storage already meets the aforementioned properties, we developed in this paper an efficient decoding algorithm that corrects a mixture of errors and erasures simultaneously such that these codes are also applicable to computational storage. We first proved that EC codes, the major component codes in the proposed construction, do not belong to relevant existing codes in the family of RS codes or Cauchy codes with known explicit decoding algorithms. We then presented an efficient decoding method for the general class of EC codes. Based on the decoding algorithm, we proposed the local and global decoding algorithms tailored for the proposed hierarchical codes. Future work includes extending the construction such that global access enables error-correction of concurrent multiple local access failures.

\section*{Acknowledgment}

This work was supported in part by UCLA Dissertation Year Fellowship, NSF under the Grants CCF-BSF 1718389, CCF 1717602, and CCF 1908730, and in part by AFOSR under the Grant 8750-20-2-0504.

\bibliography{ref}
\bibliographystyle{IEEEtran}

\end{document}